\newtheorem{theorem}{Theorem}[section]
\newtheorem{corollary}{Corollary}
\newtheorem{lemma}[theorem]{Lemma}
\newtheorem{conjecture}{Conjecture}
\theoremstyle{definition}
\newtheorem{remark}{Remark}
\subjclass{Primary: 34C11, 34D05; Secondary: 92D25, 92D40}
 \keywords{ finite time blow-up, finite time extinction, biological invasions, biological control}
\title[ additional food ] 
    {A Constant Introduction of Additional Food Does Not Yield Pest Eradication in Finite Time}
\begin{document}
\maketitle

\centerline{\scshape   Rana D. Parshad$^{1}$, Sureni Wickramsooriya$^{2}$ and Susan Bailey$^{3}$  }
\medskip
{\footnotesize

 \centerline{ 1) Department of Mathematics,}
 \centerline{Iowa State University,}
   \centerline{Ames, IA 50011, USA.}
   \medskip
 \centerline{2) Department of Mathematics,}
 \centerline{Clarkson University,}
   \centerline{ Potsdam,  NY 13699, USA.}
   \medskip
 \centerline{3) Department of Biology,}
 \centerline{Clarkson University,}
   \centerline{ Potsdam,  NY 13699, USA.}  
   
 }

\begin{abstract}
Biological control, the use of predators and pathogens to control target pests, is a promising alternative to chemical control. It is hypothesized that the introduced predators efficacy can be boosted by providing them with an additional food source. The current literature \cite{SP11} claims that if the additional food is of sufficient constant quantity and quality then pest eradication is possible in \emph{finite} time. 
The purpose of the current manuscript is to show that to the contrary, pest eradication is \emph{not possible} in finite time, for \emph{any} quantity and quality of constant additional food. 
However for a density dependent quantity of additional food, we show pest eradication in finite time is indeed possible. Our results have large scale implications for the effective design of biological control methods involving additional food.
\end{abstract}

 \section{Introduction}
\subsection{Overview}
Biological pests or invasive species present a significant and increasing threat to economies, ecosystems, and human health worldwide \cite{PZM05, PSCEWT16, L16, E11}. To combat these unwanted pests, chemical pesticides are often used. However, pesticide application can be expensive, require many repeat applications before it is effective, and can have unintended toxic effects on other species the environment \cite{PB14}. 
A promising alternative to chemical pesticides is biological control, where a second species is introduced to increase mortality, or at least slow the growth rate, of the target pest species. This antagonist species can play a number of roles including parasite, pathogen, predator, or competitor \cite{V96, C15, B07, K17}. Here we focus on biological control via predation. 
While biocontrol via predation has proven effective in some cases, in most cases predators are not able to completely eradicate the target pest species \cite{V96}. One approach to try to boost predator efficiency is to provide them an additional food source. 

Food supplementation has been tested in a range of field studies, sometimes leading to significant reductions in numbers of pests (pests declined in 28 of the 59 trials reviewed in \cite{W08}), but rarely is complete elimination reported. There has been some investigation into factors that may determine the efficacy of the food supplementation. In some studies, increased concentration of the food supplement has been shown to increase predator numbers and decrease pests \cite{CO98, ES93, SC07, S10, W16, T15, R95, T15}. Other studies show that changing the quality or the nutritional makeup of the additional food can have different impacts on pest numbers (e.g. spraying with either carbohydrates, protein, or both; \cite{ES93}). Interestingly, the duration and frequency does not appear to have a consistant effect on predator or pest populations \cite{W08}. The reason for this is not clear, but understanding these inconsistent outcomes would be an invaluable step towards using natural predators as an effective and more widely applied pest control approach.

Mathematical models that describe predator-pest dynamics are useful for identifying conditions under which pest elimination is expected to be successful or not. Note, a mathematical modeling approach is particularly helpful for exploring the food supplementation approach as it is not intuitively obvious that adding food will actually have the desired effect of more effectively eradicating the pest species. One might expect that in some cases, for example, the predator population might simply eat more of the additional food with no effect on their net consumption of prey, or perhaps even fewer prey might be consumed because some predators switch to exclusively consume the added food source.

\subsection{Prior Work}
 A number of mathematical models that describe predator-pest dynamics with an additional food source have been developed, stemming from three key papers in the literature \cite{SP07, SP10, SP11}. 
 In these works the dynamics of an introduced predator depredating on a target pest is modeled by the following system,
\begin{equation}
\label{Eqn:1}
\frac{dx}{dt} = x(1-\frac{x}{\gamma}) - \frac{xy}{1+\alpha \xi + x}, \    \frac{dy}{dt} =\frac{\beta xy}{1+\alpha \xi + x} + \frac{\beta \xi y}{1+\alpha \xi + x} - \delta y.
\end{equation}

Here $x(t),y(t)$ are the number/density of a pest and predator species, $\gamma$ is the carrying capacity of the pest, $\beta$ is the conversion efficiency of the predator, $\delta$ is the death rate of the predator, $\frac{1}{\alpha}$ is the quality of the additional food provided to the predator and $\xi$ is the quantity of additional food provided to the predator. Note, $\gamma, \beta, \delta, \alpha$ are all positive constant parameters.

In \cite{SP07, SP10} it is claimed that the above model can facilitate pest extinction in finite time. This is however not proved until \cite{SP11}.
We recap the result of interest from \cite{SP11} which quantifies the efficacy of the predator to achieve pest eradication when supplemented with additional food \cite{SP11} via \eqref{Eqn:1},

\begin{lemma}
[Srinivasu $\&$ Prasad, Bulletin of Mathematical Biology, 2011] 

(a) If the quality of the additional food satisfies $\beta - \delta \alpha  > 0$, then prey can be eradicated
from the ecosystem in a finite time by providing the predator with additional
food of quantity $\xi > \frac{\delta}{\beta - \delta \alpha }$.

(b) If the quality of the additional food satisfies $\beta - \delta \alpha  < 0$, then it is not possible to
eradicate prey from the ecosystem through provision of such additional food to the predators.

\label{lem:1}
\end{lemma}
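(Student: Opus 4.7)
The plan is to exploit the fact that under the hypotheses of part (a) the restriction of the dynamics to the invariant axis $\{x=0\}$ is $\dot y = y\bigl(\beta\xi/(1+\alpha\xi)-\delta\bigr)$, and the quality/quantity conditions $\beta-\delta\alpha>0$ and $\xi>\delta/(\beta-\delta\alpha)$ are algebraically equivalent to strict positivity of the bracket. So on the prey-free axis the predator grows exponentially without bound, and heuristically this should force $x(t)\to 0$ for any interior initial condition.

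First I would verify forward-invariance of the closed first quadrant and classify the equilibria of \eqref{Eqn:1}; next I would show global attractivity of the axis $\{x=0\}$ by a Lyapunov/LaSalle argument, most naturally with a candidate of the form $V(x,y)=x+\tfrac{1}{\beta}(y - y^{*}\log y)$ or a $\log x$-weighted functional, and use the Bendixson--Dulac/Poincar\'e--Bendixson machinery in two dimensions to rule out interior periodic orbits. This would give the qualitative conclusion that $x(t)\to 0$ and $y(t)\to\infty$ along every interior trajectory.

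The hard part is upgrading asymptotic extinction to \emph{finite-time} extinction, which is the entire content of the claim. The natural attempt is a differential-inequality/comparison argument: choose $T_{0}$ so that for $t\ge T_{0}$ one has $y(t)\ge 1+\alpha\xi+\gamma$, and then
\begin{equation*}
\dot x(t) \;\le\; -\,\frac{x\,y(t)}{1+\alpha\xi+x} \;\le\; -c\,y(t)\,x,
\end{equation*}
where $c>0$ is a uniform constant because $x$ is already bounded above by $\gamma$. Combining this with the exponential lower bound $y(t)\ge y_{0}e^{rt}$, $r=\beta\xi/(1+\alpha\xi)-\delta>0$, obtained by a second comparison on the $y$-equation, one gets $x(t)\le x_{0}\exp\!\bigl(-c\int_{T_{0}}^{t} y(s)\,ds\bigr)$, which is superexponentially small but strictly positive for every finite $t$.

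This is the anticipated obstruction: the prey equation factorizes as $\dot x = x\cdot F(x,y)$ with a \emph{locally Lipschitz} coefficient at $x=0$, so by uniqueness the trajectory $x\equiv 0$ cannot be reached from $x_{0}>0$ in finite time, no matter how negative $F$ becomes along the orbit. To produce genuine finite-time extinction one would need a non-Lipschitz pull-down at $x=0$, e.g.\ a term of order $x^{p}$ with $p<1$, which \eqref{Eqn:1} does not contain. Consequently the natural proof of (a) stalls at superexponential decay; by contrast, part (b) follows cleanly, since when $\beta-\delta\alpha<0$ the bracket $\beta\xi/(1+\alpha\xi)-\delta$ is negative for \emph{every} $\xi\ge 0$, so $y(t)\to 0$ on $\{x=0\}$ and a uniform persistence argument then shows the prey cannot be driven out.
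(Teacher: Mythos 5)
You were asked to prove Lemma \ref{lem:1}, but be aware that this lemma is quoted from \cite{SP11} precisely so that the present paper can refute it: the paper supplies no proof of it, and its main result, Theorem \ref{thm:1a}, establishes that part (a) is false --- finite-time eradication is impossible for \emph{any} constant quantity $\xi$ and quality $1/\alpha$. Your proposal in fact lands in the right place. The obstruction you identify --- that the prey equation has the form $\dot x = x\,F(x,y)$ with $F$ bounded along trajectories (since $x<\gamma$ and $y$ is bounded above by an exponential), so that $x(t)\geq x_{0}\exp\!\left(-\int_{0}^{t}|F(x(s),y(s))|\,ds\right)>0$ for every finite $t$ --- is exactly the content of the paper's Theorem \ref{thm:1a} and of the decay-rate Lemma \ref{lem:1a}. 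The paper reaches it by the substitution $v=1/x$ and a comparison argument giving $v\leq v_{0}e^{e^{(\beta\xi/(1+\alpha\xi)+\beta)t}}$, which is just the reciprocal of your superexponential-but-positive lower bound on $x$; the two arguments are equivalent. Your phrasing has the advantage of naming the structural reason (local Lipschitz continuity at $x=0$, absence of a sublinear term $x^{p}$ with $p<1$), and that same observation is what drives the paper's positive result, Theorem \ref{thm:1b}, where a density-dependent $\xi(x)$ is engineered precisely to create such a non-Lipschitz pull-down. Two caveats: your preliminary plan to prove global attractivity of $\{x=0\}$ by Lyapunov/Poincar\'e--Bendixson is unnecessary for the refutation and is not attempted in the paper (which only shows that \emph{if} extinction occurs it cannot be in finite time); and your sketch of part (b) via uniform persistence is not addressed anywhere in the paper, so the persistence step would need to be written out in full if you want to keep it.
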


\begin{remark}
Notice if $\xi = 0$, or there is no additional food \eqref{Eqn:1} reduces to the classical Lokta-Volterra predator-prey model, for which we know prey eradication \emph{is not possible}, as the $(0,0)$ state is unstable. Thus the above results are highly promising for the field of biological control, as they show a modification to the classic Lokta-Volterra system, via the introduction of additional food can cause pest eradication. That is a pest free stable state can be reached, after which point the introduced predator can continue to survive on the additional food and 
(a) either  grow in time or (b) reach a steady state.
 Thus the results of \cite{SP11} have led to much research activity recently \cite{SP07, SP10, SP11, CT17, SP18}.
\end{remark}

In the current manuscript,

\begin{enumerate}
\item We show that for a constant quantity $\xi$ of additional food, pest eradication is \emph{not possible} in finite time, even if $\xi > \frac{\delta}{\beta - \delta \alpha }$, and $\beta - \delta \alpha  > 0$. This is shown via Theorem \ref{thm:1a}.

\item Pest eradication in finite time is not possible even if an arbitrarily large quantity of constant additional food is considered, that is even in the limit that $\xi \rightarrow \infty$, or if an arbitrarily high quality of constant additional food is considered, that is even in the limit that $\alpha \rightarrow 0$.
This is shown via Corollary \ref{cor:1bc2}.

\item A constant quantity of additional food can however cause pest eradication in \emph{infinite} time. Decay rates to the extinction state are derived via Lemma \ref{lem:1a}.

\item If the quantity of additional food is not constant but rather pest density dependent, that is $\xi=\xi(x)$, then pest eradication \emph{is possible} in finite time. This is shown via Theorem \ref{thm:1b}.

\item We investigate forms of $\xi=\xi(x)$ that can lead to pest extinction in finite time, and discuss ecological and management consequences of these via Lemma \ref{lem:1bc}, Corollary \ref{cor:1bc1}, and several conjectures that we make in section \ref{conj}.
\end{enumerate}

\section{Finite Time Extinction}
\subsection{Constant Quantity of Additional Food}
We first show Lemma \ref{lem:1} is not quite accurate. That is for a constant quantity of additional food, pest extinction does not occur in \emph{finite} time. We state this via the following Theorem,

\begin{theorem}
\label{thm:1a}
Consider the predator-pest system described via \eqref{Eqn:1}. Pest eradication is not possible in finite time even if the quality of the additional food satisfies $\beta - \delta \alpha  > 0$
and the quantity of the additional food satisfies $\xi > \frac{\delta}{\beta - \delta \alpha }$.
\end{theorem}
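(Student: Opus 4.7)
The plan is to argue by contradiction (or equivalently, by a direct lower bound on $x(t)$), exploiting the fact that the first equation in \eqref{Eqn:1} has $x$ as a factor on the right-hand side. Specifically, I would rewrite
\begin{equation*}
\frac{d}{dt}\ln x(t) = \left(1 - \frac{x}{\gamma}\right) - \frac{y}{1+\alpha\xi + x},
\end{equation*}
valid as long as $x(t) > 0$. To prove $x$ cannot reach $0$ in finite time, it suffices to show that the right-hand side above is integrable on every bounded interval $[0,T]$, since then $\ln x(T) \ge \ln x(0) - \int_0^T |\cdot|\,ds > -\infty$, contradicting $x(T) = 0$.

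First I would establish a uniform upper bound for $x$. From $\dot{x} \le x(1 - x/\gamma)$ and the standard logistic comparison, $x(t) \le \max\{x(0), \gamma\}$ for all $t \ge 0$. In particular, the term $(1-x/\gamma)$ is bounded below by a constant depending only on $x(0)$ and $\gamma$. Next I would derive an \emph{a priori} bound on $y$ on any finite time interval. The predator functional response $\frac{\beta(x+\xi)}{1+\alpha\xi+x}$ is a bounded function of $x\ge 0$ (it tends to $\beta$ as $x\to\infty$ and equals $\frac{\beta\xi}{1+\alpha\xi}$ at $x=0$), so there exists a constant $K=K(\beta,\alpha,\xi)$ with $\dot y \le (K-\delta)y$, whence $y(t) \le y(0)\,e^{(K-\delta)t}$ on $[0,T]$. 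In particular $y$ cannot blow up in finite time.

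Combining these two bounds, and using the trivial inequality $\frac{y}{1+\alpha\xi+x} \le \frac{y}{1+\alpha\xi}$ valid for $x\ge 0$, the integrand above is bounded on $[0,T]$ by a locally integrable function depending only on $T$ and the parameters. Integrating from $0$ to $t$ yields a pointwise lower bound of the form
\begin{equation*}
\ln x(t) \ \ge\ \ln x(0) \;-\; M_1 t \;-\; \frac{y(0)}{(K-\delta)(1+\alpha\xi)}\bigl(e^{(K-\delta)t}-1\bigr),
\end{equation*}
so $x(t) > 0$ for every finite $t$, which directly contradicts the alleged finite-time eradication in Lemma \ref{lem:1}(a). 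The conclusion holds regardless of whether $\beta-\delta\alpha>0$ or $\xi > \delta/(\beta-\delta\alpha)$; those sign conditions play no role in the obstruction, which is a structural consequence of the factor $x$ in $\dot x$ together with the boundedness of the functional responses.

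The step I expect to be most delicate is the \emph{a priori} control on $y$: one must be careful that $y$ does not blow up before $x$ does, because a blow-up of $y$ in finite time would in principle permit $\int_0^T y/(1+\alpha\xi+x)\,ds = \infty$ and thereby invalidate the $\ln x$ estimate. The saving feature is that the predator growth rate in \eqref{Eqn:1} is a bounded Holling-II type response, so $y$ grows at most exponentially and stays finite on $[0,T]$ for every finite $T$. Once this is secured, everything else is a direct calculation.
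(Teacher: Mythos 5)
Your proof is correct and is essentially the paper's argument in different clothing: the paper substitutes $v=1/x$ and shows $v$ cannot blow up in finite time by comparison ($\dot v \le yv$, $y$ at most exponential, hence $v \le v_0 e^{e^{Kt}}$), which is precisely your $\frac{d}{dt}\ln x \ge -y$ estimate integrated up, and both rest on the same two ingredients (logistic bound on $x$, bounded Holling-type response giving exponential control of $y$). The double-exponential lower bound on $x(t)$ you obtain matches the paper's \eqref{Eqn:2tna7}, so no further comment is needed.
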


\begin{proof}

We proceed by contradiction. Assume the following parametric restrictions on the quality and quantity of additional food are satisfied, 
 $\beta - \delta \alpha  > 0$, $\xi > \frac{\delta}{\beta - \delta \alpha }$,
and the pest $x$ goes extinct in finite time. Then

\begin{equation}
\lim_{t \rightarrow T^{*} < \infty} x(t) \rightarrow 0.
\end{equation}
Now consider the state variable $v$, defined by $v=\frac{1}{x}$. We must have
\begin{equation}
\lim_{t \rightarrow T^{*} < \infty} v(t) \rightarrow \infty.
\end{equation}
or $v$ must blow-up in finite time. 
A simple substitution $v=\frac{1}{x}$ in \eqref{Eqn:1} yields the following new system for the states $v$ and $y$.

\begin{equation}\label{Eqn:2tnan}
	\begin{split}
	&\frac{dv}{dt} = -v + \frac{1}{\gamma} + \frac{yv^{2}}{1+(1+\alpha \xi)v}    \\
    &\frac{dy}{dt}  =  \frac{\beta(1+\xi v)y}{1+ (1+\alpha \xi)v} - \delta y 
\end{split}
\end{equation}

Using positivity of the states a simple estimate yields,

\begin{equation}\label{Eqn:2tna}
	\begin{split}
	&\frac{dv}{dt} = -v + \frac{1}{\gamma} + \frac{yv^{2}}{1+(1+\alpha \xi)v} \leq -v + \frac{1}{\gamma} + yv,    \\
    &\frac{dy}{dt}  =  \frac{\beta(1+\xi v)y}{1+ (1+\alpha \xi)v} - \delta y \leq \left(\frac{\beta \xi}{1 + \alpha \xi} + \beta \right)y.
\end{split}
\end{equation}

Note, from \eqref{Eqn:1} a simple comparison with the logistic equation yields, $x < \gamma$, so 
\begin{equation}
\frac{1}{x} = v > \frac{1}{\gamma} => -v + \frac{1}{\gamma}  < 0, 
\end{equation}
inserting this in \eqref{Eqn:2tna} yields,

\begin{equation}\label{Eqn:2tnae}
	\begin{split}
	&\frac{dv}{dt} = -v + \frac{1}{\gamma} + \frac{yv^{2}}{1+(1+\alpha \xi)v}  < yv,    \\
    &\frac{dy}{dt}  =  \frac{\beta(1+\xi v)y}{1+ (1+\alpha \xi)v} - \delta y \leq \left(\frac{\beta \xi}{1 + \alpha \xi} + \beta \right)y.
\end{split}
\end{equation}

By a simple comparison argument $\tilde{v}, \tilde{y}$ are super solutions to $v,y$, where $\tilde{v}, \tilde{y}$ solve

\begin{equation}\label{Eqn:2tna1}
	\frac{d \tilde{v}}{dt} =  \tilde{y}\tilde{v},  \ \tilde{v}_{0} = v_{0} , \ \frac{d \tilde{y}}{dt}  =  \left(\frac{\beta \xi}{1 + \alpha \xi} + \beta \right) \tilde{y}, \ \tilde{y}_{0} = y_{0}.
\end{equation}

Since from \eqref{Eqn:2tna1} $\tilde{y} = y_{0}e^{\left(\frac{\beta \xi}{1 + \alpha \xi} + \beta \right)t}$, plugging this into the equation for $\tilde{v}$ yields,
 \begin{equation}
 \tilde{v}= v_{0} e^{e^{\left(\frac{\beta \xi}{1 + \alpha \xi} + \beta \right)t}}
 \end{equation}
  and cannot blow-up in finite time.
Thus by comparison 

\begin{equation}
\label{Eqn:2tna7}
v \leq \tilde{v} = v_{0} e^{e^{\left(\frac{\beta \xi}{1 + \alpha \xi} + \beta \right)t}}
 \end{equation}

and also cannot blow-up in finite time. 
This implies $x = \frac{1}{v}$ cannot go extinct in finite time, which is a contradiction to our initial assumption.

\end{proof}
For an alternate proof to theorem \ref{thm:1a} the reader is refereed to the appendix section \ref{app}.

\begin{remark}
Notice, $v$ cannot blow-up in finite time, even for arbitrary large $\xi$. This is easily seen from the form of the exponential in \eqref{Eqn:2tna7}, and taking the limit therein as $\xi \rightarrow \infty$. 
Thus even arbitrary large constant quantities of additional food, cannot drive the pest $x$ to extinction in finite time. 
The same applies for constant quality. Recall the quality of additional food is $\frac{1}{\alpha}$. Thus in order to increase the quality of the additional food one must decrease $\alpha$. However, from the form of the exponential in \eqref{Eqn:2tna7}, we see that $v$ cannot blow-up in finite time, even for arbitrary small $\alpha$, or in the limit that  $\alpha \rightarrow 0$.
\end{remark}

We next derive certain rates at which $x$ decays to the extinction state in infinite time.

\begin{lemma}
\label{lem:1a}
Consider the predator-pest system described via \eqref{Eqn:1}, and assume we remain in the region of the phase defined by 

 \begin{equation}
 y \geq x + 1 + \alpha \xi 
 \end{equation}

 then if the pest is driven to the extinction state, this occurs at the following decay rate

  \begin{equation}
x_{0}e^{-e^{\left(\frac{\beta \xi}{1 + \alpha \xi} + \beta \right)t}} \leq  x \leq \frac{x_{0}\gamma}{\gamma+x_{0} t} 
 \end{equation}

\end{lemma}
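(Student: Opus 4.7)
The plan is to establish the two inequalities separately. The upper bound will follow from a direct comparison argument exploiting the hypothesis $y \geq x + 1 + \alpha\xi$, while the lower bound will be read off from the super-solution $\tilde v$ already constructed in the proof of Theorem \ref{thm:1a}.

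For the upper bound, I would start from the pest equation in \eqref{Eqn:1},
\begin{equation*}
\frac{dx}{dt} = x\left(1 - \frac{x}{\gamma}\right) - \frac{xy}{1 + \alpha\xi + x},
\end{equation*}
and use the standing assumption $y \geq x + 1 + \alpha\xi$ to estimate the second term as
\begin{equation*}
\frac{xy}{1+\alpha\xi+x} \;\geq\; \frac{x(x+1+\alpha\xi)}{1+\alpha\xi+x} \;=\; x.
\end{equation*}
Substituting back gives the clean differential inequality
\begin{equation*}
\frac{dx}{dt} \leq x - \frac{x^{2}}{\gamma} - x = -\frac{x^{2}}{\gamma}.
\end{equation*}
A standard comparison argument with the scalar ODE $\frac{dw}{dt} = -\frac{w^{2}}{\gamma}$, $w(0) = x_{0}$, whose explicit solution is $w(t) = \frac{x_{0}\gamma}{\gamma + x_{0}t}$, then yields $x(t) \leq \frac{x_{0}\gamma}{\gamma+x_{0}t}$.

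For the lower bound, I would simply recycle the super-solution estimate \eqref{Eqn:2tna7} obtained in the proof of Theorem \ref{thm:1a}. That inequality reads $v(t) \leq v_{0}\, e^{e^{(\beta\xi/(1+\alpha\xi)+\beta)t}}$; inverting via $x = 1/v$ and using $v_{0} = 1/x_{0}$ gives
\begin{equation*}
x(t) \geq x_{0}\, e^{-e^{\left(\frac{\beta\xi}{1+\alpha\xi}+\beta\right)t}},
\end{equation*}
which is the claimed lower bound. Combining the two inequalities proves the lemma.

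I do not anticipate a serious obstacle: the upper bound is essentially algebra on the right-hand side of \eqref{Eqn:1} and a textbook scalar comparison, while the lower bound is a direct transcription of an estimate already in hand. The only subtlety worth flagging is that the hypothesis $y \geq x + 1 + \alpha\xi$ must hold throughout the time interval on which the upper comparison is applied; since the lemma is stated under the assumption that we remain in that region of phase space, this is exactly what is granted.
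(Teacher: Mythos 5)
Your proof is correct and essentially the same as the paper's: the lower bound is taken verbatim from \eqref{Eqn:2tna7}, and your upper-bound estimate $\frac{dx}{dt}\leq -\frac{x^{2}}{\gamma}$ is exactly the paper's inequality $\frac{dv}{dt}\geq \frac{1}{\gamma}$ rewritten in the original variable $x=1/v$, both resting on the same use of the hypothesis $y\geq x+1+\alpha\xi$ to make the predation term absorb the linear growth term. The only cosmetic difference is that the paper integrates the linear inequality in $v$ directly, whereas you invoke a scalar comparison for the Riccati-type equation in $x$.
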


\begin{proof}
Assume $ y \geq x + 1 + \alpha \xi = \frac{1}{v} + 1 + \alpha \xi$, then we have $yv \geq 1+ (1 + \alpha \xi )v$ which implies

  \begin{equation}
-v + \frac{1}{\gamma} + \frac{yv^{2}}{1+(1+\alpha \xi)v} = -v + \frac{1}{\gamma} +\left( \frac{yv}{1+(1+\alpha \xi)v} \right) v  > -v + \frac{1}{\gamma} + v = \frac{1}{\gamma},
 \end{equation}
 
 inserting the above in \eqref{Eqn:2tnan} yields,
 
 \begin{equation}\label{Eqn:2tnael}
	\begin{split}
	&\frac{dv}{dt} = -v + \frac{1}{\gamma} + \frac{yv^{2}}{1+(1+\alpha \xi)v}  \geq  \frac{1}{\gamma},    \\
    &\frac{dy}{dt}  =  \frac{\beta(1+\xi v)y}{1+ (1+\alpha \xi)v} - \delta y \leq \left(\frac{\beta \xi}{1 + \alpha \xi} + \beta \right)y.
\end{split}
\end{equation}
 
 Integrating the above yields, 
 
  \begin{equation}\label{Eqn:2tnaen1}
	v  \geq \frac{t}{\gamma } + v_{0}
\end{equation}
 or
   \begin{equation}\label{Eqn:2tnaen2}
	x \leq \frac{x_{0}\gamma}{\gamma+x_{0} t} 
\end{equation}

The lower bound follows from the estimate via \eqref{Eqn:2tna7} where
 
   \begin{equation}\label{Eqn:2tnaen3}
	\frac{1}{x} = v  \leq \tilde{v} < v_{0} e^{e^{\left(\frac{\beta \xi}{1 + \alpha \xi} + \beta \right)t}}
\end{equation}
 
 thus
 
    \begin{equation}\label{Eqn:2tnaen4}
	x_{0} e^{-e^{\left(\frac{\beta \xi}{1 + \alpha \xi} + \beta \right)t}} \leq x,
\end{equation}
 
 and the proof is complete.
 
\end{proof}

\begin{corollary}
\label{cor:1bc2}
Consider the predator-pest system described via \eqref{Eqn:1}, and assume we remain in the region of the phase defined by 

 \begin{equation}
 y \geq x + 1 + \alpha \xi,
 \end{equation}

 then if the pest is driven to the extinction state, for an arbitrary large quantity $\xi$ of additional food, this occurs at best at the super exponential rate

  \begin{equation}
x_{0}e^{-e^{\left(\frac{\beta }{  \alpha} + \beta \right)t}} \leq  x.
 \end{equation}

Also, if the pest is driven to the extinction state, for an arbitrary high quality $\frac{1}{\alpha}$ of additional food, this occurs at best at the super exponential rate

 \begin{equation}
x_{0}e^{-e^{\left(\beta \xi + \beta \right)t}} \leq  x.
 \end{equation}

\end{corollary}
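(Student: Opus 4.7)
The plan is to derive both bounds as direct limiting consequences of the lower bound established in Lemma \ref{lem:1a}, namely
\begin{equation*}
x_{0} e^{-e^{\left(\frac{\beta \xi}{1 + \alpha \xi} + \beta \right)t}} \leq x.
\end{equation*}
The only work is to track what happens to the exponent $\frac{\beta \xi}{1+\alpha \xi}$ under the two limiting regimes, and to verify that the one-sided inequality is preserved when we pass to the limit.

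First I would handle the arbitrarily large quantity case. Viewing $\frac{\beta \xi}{1+\alpha \xi}$ as a function of $\xi$, it is increasing on $(0,\infty)$ and
\begin{equation*}
\lim_{\xi \to \infty} \frac{\beta \xi}{1+\alpha \xi} = \frac{\beta}{\alpha}.
\end{equation*}
Hence $\frac{\beta \xi}{1+\alpha \xi}+\beta \leq \frac{\beta}{\alpha}+\beta$ uniformly in $\xi$. Since the map $s \mapsto x_{0}e^{-e^{st}}$ is decreasing in $s$ for each fixed $t>0$, plugging the bound on the exponent into Lemma \ref{lem:1a} reverses the inequality inside the double exponential and yields
\begin{equation*}
x_{0} e^{-e^{\left(\frac{\beta}{\alpha} + \beta \right)t}} \leq x_{0} e^{-e^{\left(\frac{\beta \xi}{1 + \alpha \xi} + \beta \right)t}} \leq x,
\end{equation*}
which is the first claim.

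For the high quality case I would argue symmetrically. Now regard $\frac{\beta \xi}{1+\alpha \xi}$ as a function of $\alpha$: it is decreasing on $[0,\infty)$ with
\begin{equation*}
\lim_{\alpha \to 0} \frac{\beta \xi}{1+\alpha \xi} = \beta \xi,
\end{equation*}
so $\frac{\beta \xi}{1+\alpha \xi}+\beta \leq \beta \xi + \beta$ for every $\alpha > 0$. The same monotonicity of $s \mapsto x_{0}e^{-e^{st}}$ then gives
\begin{equation*}
x_{0} e^{-e^{\left(\beta \xi + \beta \right)t}} \leq x,
\end{equation*}
which is the second claim.

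There is really no hard step here; the only subtlety worth flagging is the sign-tracking through the nested exponentials (a larger exponent produces a smaller lower bound on $x$, so replacing $\frac{\beta \xi}{1+\alpha \xi}$ by its supremum $\frac{\beta}{\alpha}$ or $\beta \xi$ gives a valid, and tight, lower envelope). Both inequalities are uniform in the limiting parameter, so the statement that the decay is "at best" super exponential in each limiting regime follows at once.
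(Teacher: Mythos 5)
Your proof is correct and follows essentially the same route as the paper: both simply pass to the limit $\xi \to \infty$ (respectively $\alpha \to 0$) in the lower bound $x_{0}e^{-e^{\left(\frac{\beta \xi}{1+\alpha \xi}+\beta\right)t}} \leq x$ from Lemma \ref{lem:1a}. Your added observation that $\frac{\beta \xi}{1+\alpha \xi}$ is monotone with supremum $\frac{\beta}{\alpha}$ (resp. $\beta \xi$), so that the limiting expression is in fact a uniform lower bound for every finite $\xi$ (resp. every $\alpha>0$), is a small but welcome tightening of a step the paper leaves implicit.
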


\begin{proof}
We can consider an arbitrarily large quantity of additional food by taking the limit as $\xi \rightarrow \infty$ in \eqref{Eqn:2tnaen4} to yield

 \begin{equation}
	\lim_{\xi \rightarrow \infty} x_{0} e^{-e^{\left(\frac{\beta \xi}{1 + \alpha \xi} + \beta \right)t}} =  x_{0} e^{-e^{\left(\frac{\beta }{ \alpha } + \beta \right)t}} \leq x.
\end{equation}

We can consider an arbitrarily high quality of additional food by taking the limit as $\alpha \rightarrow 0$ in \eqref{Eqn:2tnaen4} to yield

 \begin{equation}
\lim_{\alpha \rightarrow 0} x_{0} e^{-e^{\left(\frac{\beta \xi}{1 + \alpha \xi} + \beta \right)t}} = x_{0}e^{-e^{\left(\beta \xi + \beta \right)t}} \leq  x.
 \end{equation}

\end{proof}

\begin{figure}[!ht]
\includegraphics[scale=0.28]{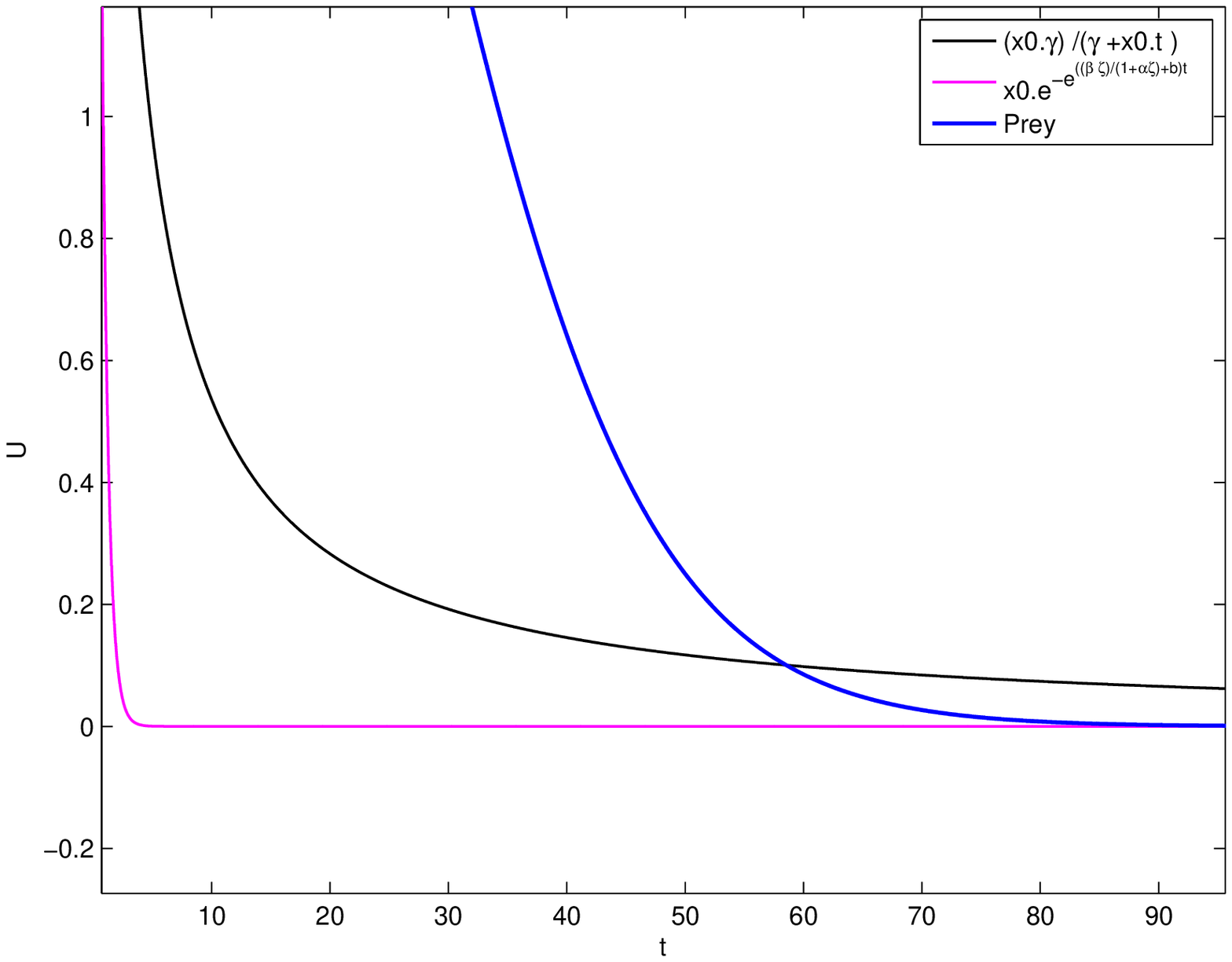}
\includegraphics[scale=0.28]{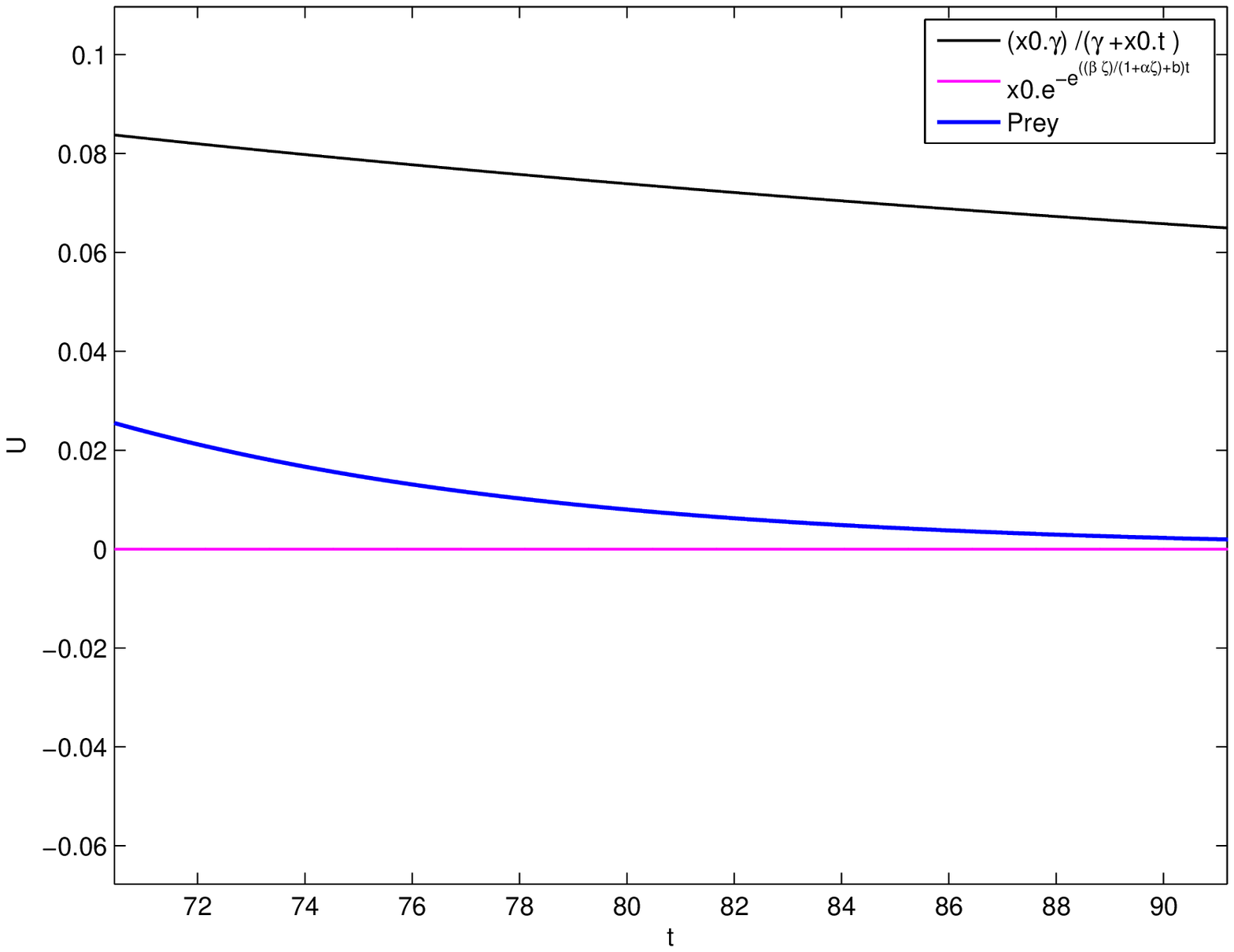}
\caption{We simulate \eqref{Eqn:1} with $\beta=0.4, \delta=0.3, \gamma = 6, \xi = 7.6, \alpha = 1.2$. Here $\xi =7.6 > 7.5 = \frac{0.3}{ 0.4 - (0.3)(1.2)} = \frac{\delta}{\beta - \delta \alpha}$. Thus according to lemma \ref{lem:1} pest eradication is possible in a finite time. We start the simulations from initial conditions $x_{0}=5.017, y_{0}=5.078$. These are compared to the polynomial and super exponential decay rates derived in lemma \ref{lem:1a}.
We see from the simulations that the decay rate of the pest to the extinction state is closer to super exponential than polynomial - this is clearly visible when we zoom in. }
\label{Fig:3sc11}
\end{figure}

\subsection{Density Dependent Quantity of Additional Food}

We show that for density dependent additional food, $v$ can blow-up in finite time, and so $x$ can go extinct in finite time. We state the following theorem,

\begin{theorem}
\label{thm:1b}
Consider the predator-pest system described via \eqref{Eqn:1}. If the quantity of the additional food $\xi$ is pest density dependent, that is 
$\xi = \xi(x)$, then for an appropriate choice of parameters and initial conditions, pest eradication is possible in finite time. 
\end{theorem}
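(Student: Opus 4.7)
The plan is to adapt the $v=1/x$ substitution from the proof of Theorem \ref{thm:1a}, but now exploit the functional dependence $\xi=\xi(x)=\xi(1/v)$ as an extra degree of freedom. The transformed $v$-equation becomes
\begin{equation*}
\frac{dv}{dt} = -v + \frac{1}{\gamma} + \frac{yv^{2}}{1+(1+\alpha \xi(1/v))v},
\end{equation*}
coupled to the corresponding equation for $y$. Since finite-time extinction of $x$ is equivalent to finite-time blow-up of $v$, the goal is to choose $\xi(x)$ and initial data $(x_{0},y_{0})$ so that the right-hand side admits a lower bound of super-linear Riccati type, $\dot v \geq c\,v^{1+\varepsilon}$, on a forward-invariant subset of phase space. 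In the constant-$\xi$ proof the denominator was of order $v$ for large $v$, so the nonlinearity collapsed to the sub-Riccati form $\lesssim yv$ and only double-exponential growth of $v$ was possible; a density-dependent $\xi$ is precisely the tunable knob needed to break that linear bound.

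Concretely I would proceed in three steps. First, exhibit a specific functional form $\xi(x)$ (natural candidates being profiles that couple the supply rate directly to $x$ near extinction, so that the effective predation response ceases to be linear in $x$) which turns the predation term in the $v$-equation into a genuinely super-linear source for large $v$. Second, identify a forward-invariant region for the transformed $(v,y)$ system, analogous to the region $y \geq x + 1 + \alpha\xi$ used in Lemma \ref{lem:1a}, on which $y$ stays uniformly bounded below by a positive constant for all time up to the putative blow-up. Third, on this region compare $v$ with the autonomous ODE $\dot{\tilde v} = c\tilde v^{1+\varepsilon}$, whose solution blows up at the explicit time $T^{*} = \tilde v_{0}^{-\varepsilon}/(c\varepsilon)$, and conclude by standard ODE comparison that $v$, and hence $x$, reaches its singular state no later than $T^{*}$.

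The main obstacle is the first step. For any nonnegative $\xi$ the denominator $1+(1+\alpha\xi(1/v))v$ is at least $1+v$, and the per-capita growth rate of the predator, $\beta(x+\xi)/(1+\alpha\xi+x) - \delta$, is uniformly bounded above by $\beta/\alpha - \delta$; neither $y$ nor the functional response alone can supply super-linear growth in $v$ without delicate coupling. Engineering $\xi(x)$ so that the coupled $(v,y)$ dynamics produce a super-linear lower bound while simultaneously admitting a nonempty, biologically meaningful invariant region is the heart of the argument, and is precisely what Lemma \ref{lem:1bc} and Corollary \ref{cor:1bc1} are used for in the sequel. Once a valid $\xi(x)$ is in hand, the super- and sub-solution bookkeeping in steps two and three reduces to a routine comparison argument directly paralleling the estimates already carried out in the proofs of Theorem \ref{thm:1a} and Lemma \ref{lem:1a}.
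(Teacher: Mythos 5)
Your high-level strategy is the same as the paper's: pass to $v=1/x$, recognize that finite-time extinction of $x$ is finite-time blow-up of $v$, and try to force a super-linear (Riccati-type) lower bound on $\dot v$ by exploiting $\xi=\xi(x)$ as a design parameter. You have also correctly diagnosed why constant $\xi$ fails (the denominator is linear in $v$, so the source term is at most $yv$) and correctly inferred from Lemma \ref{lem:1bc} and Corollary \ref{cor:1bc1} that any successful $\xi$ must change sign. But the proposal stops exactly at the step you yourself call ``the heart of the argument'': you never exhibit a $\xi(x)$ that produces the super-linear bound, and without it there is no proof. This is a genuine gap, not a routine detail, because the whole theorem is an existence claim about such a $\xi$.

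The paper's construction is to take
\begin{equation*}
\xi\!\left(\tfrac{1}{v}\right)=\frac{v^{q-1}-1}{\alpha},\qquad 0<q<1,
\end{equation*}
equivalently $\xi(x)=(x^{1-q}-1)/\alpha$, which is negative precisely for $x<1$ (low pest density), consistent with Corollary \ref{cor:1bc1}. The point of this choice is the algebraic identity
\begin{equation*}
1+\bigl(1+\alpha\xi(\tfrac{1}{v})\bigr)v=1+v^{q},
\end{equation*}
which collapses the denominator from linear to sublinear in $v$, so the predation term becomes $yv^{2}/(1+v^{q})\sim yv^{2-q}$ with exponent $2-q>1$. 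The predator equation becomes $\dot y=\beta(\alpha+v^{q}-v)y/(\alpha(1+v^{q}))-\delta y$, and the paper then argues (by contradiction on an assumed bound for $v$, choosing $\beta,\alpha,\delta$ relative to that bound so that $\dot y\geq cy$ with $c>0$, hence $y\geq 1$) that $v$ dominates the solution of $\dot{\tilde v}=-\tilde v+\tilde v^{2}/(1+\tilde v^{q})$, which blows up in finite time once $v_{0}>v_{0}^{q}+1$. Your steps two and three (uniform lower bound on $y$, then ODE comparison with an explicit blow-up time) do match the paper's bookkeeping in spirit, though the paper secures the lower bound on $y$ through a parameter-tuning contradiction argument rather than a forward-invariant region. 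To complete your proof you would need to supply a concrete $\xi$ with the properties above; everything else in your outline then goes through essentially as you describe.
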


\begin{proof}

Consider \eqref{Eqn:1} with $\xi = \xi(x)$ to yield,

\begin{equation}
\label{Eqn:2a}
\frac{dx}{dt} = x(1-\frac{x}{\gamma}) - \frac{xy}{1+\alpha \xi(x) + x}, \    \frac{dy}{dt} =\frac{\beta xy}{1+\alpha \xi(x) + x} + \frac{\beta \xi(x) y}{1+\alpha \xi(x) + x} - \delta y.
\end{equation}

We make the substitution $v=\frac{1}{x}$ in \eqref{Eqn:2a} to yield,

\begin{equation}\label{Eqn:2tnc}
	\begin{split}
	&\frac{dv}{dt} = -v + \frac{1}{\gamma} + \frac{yv^{2}}{1+(1+\alpha \xi(\frac{1}{v}))v}    \\
    &\frac{dy}{dt}  =  \frac{\beta(1+\xi(\frac{1}{v}) v)y}{1+ (1+\alpha \xi(\frac{1}{v}))v} - \delta y 
\end{split}
\end{equation}

We will proceed by constructing an appropriate $\xi$, that will cause $v$ to blow-up in finite time.
To this end we focus on the term, 
\begin{equation}
\boxed{\frac{yv^{2}}{1+(1+\alpha \xi(\frac{1}{v}))v}}
\end{equation}

 in \eqref{Eqn:2tnc}.

let us choose 
\begin{equation}
\xi\left(\frac{1}{v}\right)=\left(\frac{v^{q-1}-1}{\alpha }\right), \ 0<q<1.
\end{equation}
then

\begin{equation}
1+\left(1+\alpha\xi\left(\frac{1}{v}\right)\right)v=1+\left(1+\alpha\left(\frac{v^{q-1}-1}{\alpha }\right)\right)v=1+v^q
\end{equation}

\begin{equation}
\beta\left(1+\xi\left(\frac{1}{v}\right).v\right)y=\beta\left(1+\left(\frac{v^{q-1}-1}{\alpha }\right)v\right)y=\frac{\beta}{\alpha}\left(\alpha+v^q-v\right)y
\end{equation}
This yields the following system

\begin{equation}
\label{Eqn:v1}
    \frac{dv}{dt}=-v+\frac{1}{\gamma}+\frac{y v^{2}}{1+v^q}
\end{equation}

\begin{equation}
\label{Eqn:v21}
  \frac{dy}{dt}=\frac{\beta\left(\alpha+v^q-v\right)y}{\alpha \left(1+v^q\right)}-\delta y 
\end{equation}

Our goal is to show that there exists initial conditions and parameters s.t the $v$ solving  
\eqref{Eqn:v1} blows up in finite time.
\\
\vspace{2mm}
\\
\textbf{Case 1:} We proceed by contradiction. Assume there exists a time independent bound $M$ for $v$, that is $0 \leq v \leq M$, for any $v_{0} > 0$. Standard calculus shows the function 

\begin{equation}
f(v) = \frac{\left(\alpha+v^q-v\right)}{ \left(1+v^q\right)}
\end{equation}

is monotonically decreasing in $v$, so its minimum value is 

\begin{equation}
\frac{\left(\alpha+M^q-M\right)}{ \left(1+M^q\right)}. 
\end{equation}

In the event that this is positive that is 

\begin{equation}
\frac{\left(\alpha+M^q-M\right)}{ \left(1+M^q\right)} > \delta_{1} > 0,
\end{equation}

we choose $\beta, \alpha, \delta$ s.t,  

\begin{equation}
\frac{\beta \delta_{1}}{\alpha} - \delta > 0.
\end{equation}

In case we have,

\begin{equation}
\frac{\left(M^q-M\right)}{ \left(1+M^q\right)} <  0,
\end{equation}

we choose 

\begin{equation}
\alpha = M+1,  \ \mbox{and} \  \frac{\beta }{\alpha} - \delta > 0. 
\end{equation}

In either case

\begin{equation}
\label{Eqn:v2133}
  \frac{dy}{dt} \geq c y , \ c > 0
\end{equation}
for a positive constant $c$, where $c$ is either $ \frac{\beta }{\alpha} - \delta$ or $\frac{\beta \delta_{1}}{\alpha} - \delta$
and so $y  \geq 1$.

Thus inserting this into \eqref{Eqn:v1} we obtain,

\begin{equation}
\label{Eqn:v1nn1}
    \frac{dv}{dt} = -v+\frac{1}{\gamma}+\frac{ yv^{2}}{1+v^q} \geq -v+\frac{1}{\gamma}+\frac{ v^{2}}{1+v^q} > -v + \frac{ v^{2}}{1+v^q}
\end{equation}
However, $\tilde{v}$ solving

\begin{equation}
\label{Eqn:v3}
    \frac{d\tilde{v}}{dt}=-\tilde{v}+\frac{ \tilde{v}^{2}}{1+\tilde{v}^q}, \ \tilde{v}_{0} = v_{0}.
\end{equation}

 blows up in finite time as long as $v_{0} > (v_{0})^{q} + 1$. Thus $v \geq \tilde{v}$ by standard comparison and must also blow-up in finite time for such sufficiently large initial conditions. 
This is a contradiction and the result follows.
\\
\vspace{2mm}
\\
\textbf{Case 2:} We proceed again by contradiction. Assume now that there exists a time dependent bound for $v$, for any $v_{0} > 0$. That is WLOG say $v$ grows exponentially in time, and so 

\begin{equation}
v \leq  e^{cT}, \ t \in [0,T].
\end{equation}

As earlier, for any given $T$ we have
 the minimum value of $f(v)$ is given by
\begin{equation}
\frac{\left(\alpha+e^{qcT}-e^{cT}\right)}{ \left(1+e^{qcT}\right)}. 
\end{equation}

In the event that this is positive that is 

\begin{equation}
\frac{\left(\alpha+e^{qcT}- e^{cT}\right)}{ \left(1+e^{qcT}\right)} > \delta_{1} > 0,
\end{equation}

we choose $\beta, \alpha, \delta$ s.t,  

\begin{equation}
\frac{\beta \delta_{1}}{\alpha} - \delta > 0.
\end{equation}

In case we have,

\begin{equation}
\frac{\left(e^{qcT}-e^{cT}\right)}{ \left(1+e^{qcT}\right)} <  0,
\end{equation}

we choose 

\begin{equation}
\alpha = e^{cT}+1,  \ \mbox{and} \  \frac{\beta }{\alpha} - \delta > 0. 
\end{equation}

In either case

\begin{equation}
\label{Eqn:v2133}
  \frac{dy}{dt} \geq c y , \ c > 0
\end{equation}
for a positive constant $c$, where $c$ is either $ \frac{\beta }{\alpha} - \delta$ or $\frac{\beta \delta_{1}}{\alpha} - \delta$
and so $y  \geq 1$.

Thus inserting this into \eqref{Eqn:v1} we obtain,

\begin{equation}
\label{Eqn:v1nn}
    \frac{dv}{dt} = -v+\frac{1}{\gamma}+\frac{ yv^{2}}{1+v^q} \geq -v+\frac{1}{\gamma}+\frac{ v^{2}}{1+v^q} > -v + \frac{ v^{2}}{1+v^q}
\end{equation}

However, $\tilde{v}$ solving

\begin{equation}
\label{Eqn:v3}
    \frac{d\tilde{v}}{dt}=-\tilde{v}+\frac{ \tilde{v}^{2}}{1+\tilde{v}^q}, \ \tilde{v}_{0} = v_{0}.
\end{equation}

 blows up at a finite time $T^{*}$, as long as $v_{0} > (v_{0})^{q} + 1$. Note, for a given $T$ it is possible that $T < T^{*}$, and so the blow-up of $\tilde{v}$ at $T^{*}$, does not give us any information about $v$. However, if we choose $v_{0}$ sufficiently large, we can decrease $T^{*}$ s.t $T^{*} \leq T$, and in this case
 $v$ must also blow-up in finite time for such sufficiently large initial conditions. 
This is a contradiction and the result follows.

These results demonstrate that $v$ blows-up in finite time, hence the state variable $x=\frac{1}{v}$, must go extinct in finite time.

\end{proof}

\begin{lemma}
\label{lem:1bc}
Consider the predator-pest system described via \eqref{Eqn:2a}. If the pest density dependent quantity of additional food is non-negative,
$ \xi(x) \geq 0$, then pest eradication is not possible in finite time. 
\end{lemma}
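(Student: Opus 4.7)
The plan is to adapt the proof of Theorem \ref{thm:1a} to the density-dependent setting, noting that the sign hypothesis $\xi(x) \geq 0$ is precisely what is needed to make the key estimates of that proof go through uniformly in $v = 1/x$. I proceed by contradiction: assume there exists a finite $T^*$ with $\lim_{t \to T^*} x(t) = 0$, or equivalently that $v(t)$ blows up in finite time, and then derive a double-exponential upper bound on $v$ that precludes this.

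First I would check that the logistic comparison $x \leq \gamma$ is still available for \eqref{Eqn:2a}: the predator removal term $-xy/(1+\alpha\xi(x)+x)$ is non-positive whenever $\xi(x) \geq 0$, so $dx/dt \leq x(1-x/\gamma)$ and hence $-v + 1/\gamma \leq 0$, exactly as in Theorem \ref{thm:1a}. Next, because $\xi(1/v) \geq 0$ the denominator $1+(1+\alpha\xi(1/v))v$ dominates $v$, giving
\[
\frac{yv^{2}}{1+(1+\alpha\xi(1/v))v} \leq yv,
\]
so that $dv/dt \leq yv$. For the $y$ equation, the critical observation is that, independently of the particular function $\xi(\cdot)$,
\[
\frac{\beta(1+\xi v)}{1+(1+\alpha\xi)v} \;\leq\; \beta + \frac{\beta\xi}{1+\alpha\xi} \;\leq\; \beta + \frac{\beta}{\alpha},
\]
a uniform bound in $v$ and in $\xi \geq 0$. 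Consequently $dy/dt \leq (\beta + \beta/\alpha)\, y$.

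With these two uniform differential inequalities in hand, the comparison argument of Theorem \ref{thm:1a} applies essentially verbatim, with the coefficient $\beta\xi/(1+\alpha\xi) + \beta$ replaced throughout by the $\xi$-free constant $\beta + \beta/\alpha$. The super-solution $\tilde{y}$ grows at most exponentially, and the super-solution $\tilde{v}$ is then bounded above by a double exponential of the form $v_0 \exp\!\bigl(e^{(\beta+\beta/\alpha)t}\bigr)$, which is finite for every finite $t$. Thus $v$ cannot blow up in finite time, contradicting the assumed finite-time extinction of $x$.

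The main subtlety, and the only genuinely new ingredient relative to Theorem \ref{thm:1a}, is recognising that the coefficient $\beta(1+\xi v)/(1+(1+\alpha\xi)v)$ must be controlled uniformly over all admissible functions $\xi(x) \geq 0$; this is exactly where the hypothesis $\xi(x) \geq 0$ is pivotal. It is also precisely the hypothesis that is violated by the construction in Theorem \ref{thm:1b}: the choice $\xi(1/v) = (v^{q-1}-1)/\alpha$ with $0<q<1$ becomes negative for $v > 1$ (i.e.\ for small $x$), destroying both the inequality $1+(1+\alpha\xi(1/v))v \geq 1+v$ and the uniform bound on the $y$-coefficient, which is precisely what allows finite-time blow-up of $v$ to occur there.
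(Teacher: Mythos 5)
Your proof is correct and follows essentially the same route as the paper: substitute $v=1/x$, use $\xi(1/v)\ge 0$ to get $dv/dt\le yv$ and a linear growth bound on $y$, and conclude a double-exponential bound on $v$ that rules out finite-time blow-up. Your one refinement — replacing the paper's coefficient $\frac{\beta\xi}{1+\alpha\xi}+\beta$ (which is no longer a constant once $\xi=\xi(x)$ varies in time) by the uniform bound $\beta+\frac{\beta}{\alpha}$ valid for all $\xi\ge 0$ — is a genuine and welcome tightening of the argument, but it does not change its structure.
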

\begin{proof}
Via the non-negativity of $\xi(x) = \xi \left(\frac{1}{v}\right)\geq 0$, and earlier estimates we have

\begin{equation}\label{Eqn:2tnae1}
	\begin{split}
	&\frac{dv}{dt} = -v + \frac{1}{\gamma} + \frac{yv^{2}}{1+(1+\alpha  \xi \left(\frac{1}{v}\right))v}  < -v + \frac{1}{\gamma} + \frac{yv^{2}}{v} <yv,    \\
    &\frac{dy}{dt}  =  \frac{\beta(1+ \xi \left(\frac{1}{v}\right) v)y}{1+ (1+\alpha  \xi \left(\frac{1}{v}\right))v} - \delta y \leq \left(\frac{\beta \xi}{1 + \alpha \xi} + \beta \right)y.
\end{split}
\end{equation}

The result follows trivially as $v < v_{0} e^{e^{\left(\frac{\beta \xi}{1 + \alpha \xi} + \beta \right)t}}$, and cannot blow-up in finite time, and so $x$ cannot go extinct in finite time.

\end{proof}

An easy consequence of the above follows,
\begin{corollary}
\label{cor:1bc1}
Consider the predator-pest system described via \eqref{Eqn:2a}. In order for pest eradication in finite time,
 the pest density dependent quantity of additional food $\xi(x)$ must change sign.

\end{corollary}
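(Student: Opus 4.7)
The plan is to argue by contrapositive, using Lemma \ref{lem:1bc} as the essentially complete input. The statement that $\xi(x)$ \emph{must change sign} for finite-time eradication is logically equivalent to: if $\xi(x)$ does \emph{not} change sign on the biologically relevant pest range $x>0$, then pest eradication in finite time is impossible. Since $\xi(x)$ not changing sign means $\xi(x)\ge 0$ for all $x>0$ or $\xi(x)\le 0$ for all $x>0$, it suffices to handle these two cases.

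First I would dispose of the case $\xi(x)\ge 0$ for all $x>0$. This is precisely the hypothesis of Lemma \ref{lem:1bc}, which yields the super-exponential upper bound
\[
v(t) \;<\; v_{0}\,e^{e^{\left(\frac{\beta\xi}{1+\alpha\xi}+\beta\right)t}},
\]
on the transformed state $v=1/x$; consequently $v$ cannot blow up in finite time, and therefore $x$ cannot reach $0$ in finite time. So in this case the contrapositive is immediate.

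Next I would address $\xi(x)\le 0$ for all $x>0$. Here the natural argument is twofold. Ecologically, a negative quantity of additional food is meaningless (one cannot deliver a negative amount of food to the predator), so this regime is excluded by the modeling convention used throughout the paper; under this reading the corollary follows immediately from the first case. Mathematically, if one insists on treating $\xi\le 0$ analytically (with the standing assumption that the denominator $1+\alpha\xi(x)+x$ remains positive so the system stays well posed), one observes that removing the extra food contribution can only shrink the predator's effective growth term relative to the $\xi\equiv 0$ Lotka-Volterra case noted in the Remark after Lemma \ref{lem:1}; as $\xi\equiv 0$ does not admit pest extinction at all, the same $v$-comparison used in Theorem \ref{thm:1a} rules out finite-time blow-up of $v$. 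Either route closes the second case.

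The main obstacle, such as it is, is not an analytical one but an interpretive one: making precise what ``change sign'' means and in particular deciding whether $\xi\le 0$ must be ruled out by modeling fiat or by a short analytic argument. Once that convention is fixed, the corollary is a one-line contrapositive of Lemma \ref{lem:1bc}, and no further estimation is required.
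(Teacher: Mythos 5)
Your first case is exactly what the paper intends: the paper gives no proof of this corollary beyond the phrase ``an easy consequence of the above,'' and the intended content is precisely the contrapositive of Lemma \ref{lem:1bc}, i.e.\ if $\xi(x)\ge 0$ everywhere then $v<v_{0}e^{e^{(\beta\xi/(1+\alpha\xi)+\beta)t}}$ cannot blow up, so $x$ cannot vanish in finite time. That part of your proposal matches the paper.

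The problem is your ``mathematical'' route for the remaining case $\xi(x)\le 0$. The claim that a non-positive $\xi$ ``can only shrink the predator's effective growth term'' and hence that ``the same $v$-comparison rules out finite-time blow-up'' is backwards for the equation that actually controls extinction. Finite-time extinction of $x$ is decided by the prey equation, i.e.\ by whether $\frac{yv^{2}}{1+(1+\alpha\xi)v}$ is superlinear in $v$; bounding this term by $yv$ (the step used in Theorem \ref{thm:1a} and Lemma \ref{lem:1bc}) requires $1+(1+\alpha\xi)v\ge v$, equivalently $x+\alpha\xi(x)\ge 0$, which your standing assumption $1+\alpha\xi(x)+x>0$ does not supply once $\xi<0$. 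Making $\xi$ negative shrinks the denominator $1+\alpha\xi+x$ and therefore \emph{increases} the per-capita predation pressure on the prey --- this is exactly the mechanism exploited in Theorem \ref{thm:1b}. Concretely, $\xi(x)=\min\{0,(x^{1-q}-1)/\alpha\}$ with $0<q<1$ is non-positive for every $x>0$ and never changes sign, yet for initial data with $x_{0}<1$ it reproduces verbatim the system \eqref{Eqn:v1}--\eqref{Eqn:v21} on the region $v>1$, whose $v$ blows up in finite time. So the everywhere-non-positive case cannot be closed by an analytic comparison; it can only be excluded by the modeling convention that a quantity of food must be non-negative somewhere (your first, ``ecological'' route, which is also what the paper tacitly relies on). You should delete the analytic alternative and state explicitly that, read literally, ``must change sign'' follows from Lemma \ref{lem:1bc} only when combined with that convention.
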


We demonstrate the results of Theorem \ref{thm:1a}, Theorem \ref{thm:1b} numerically, see Fig. \ref{Fig:3sc11}.

\begin{figure}[!ht]
\includegraphics[scale=0.3]{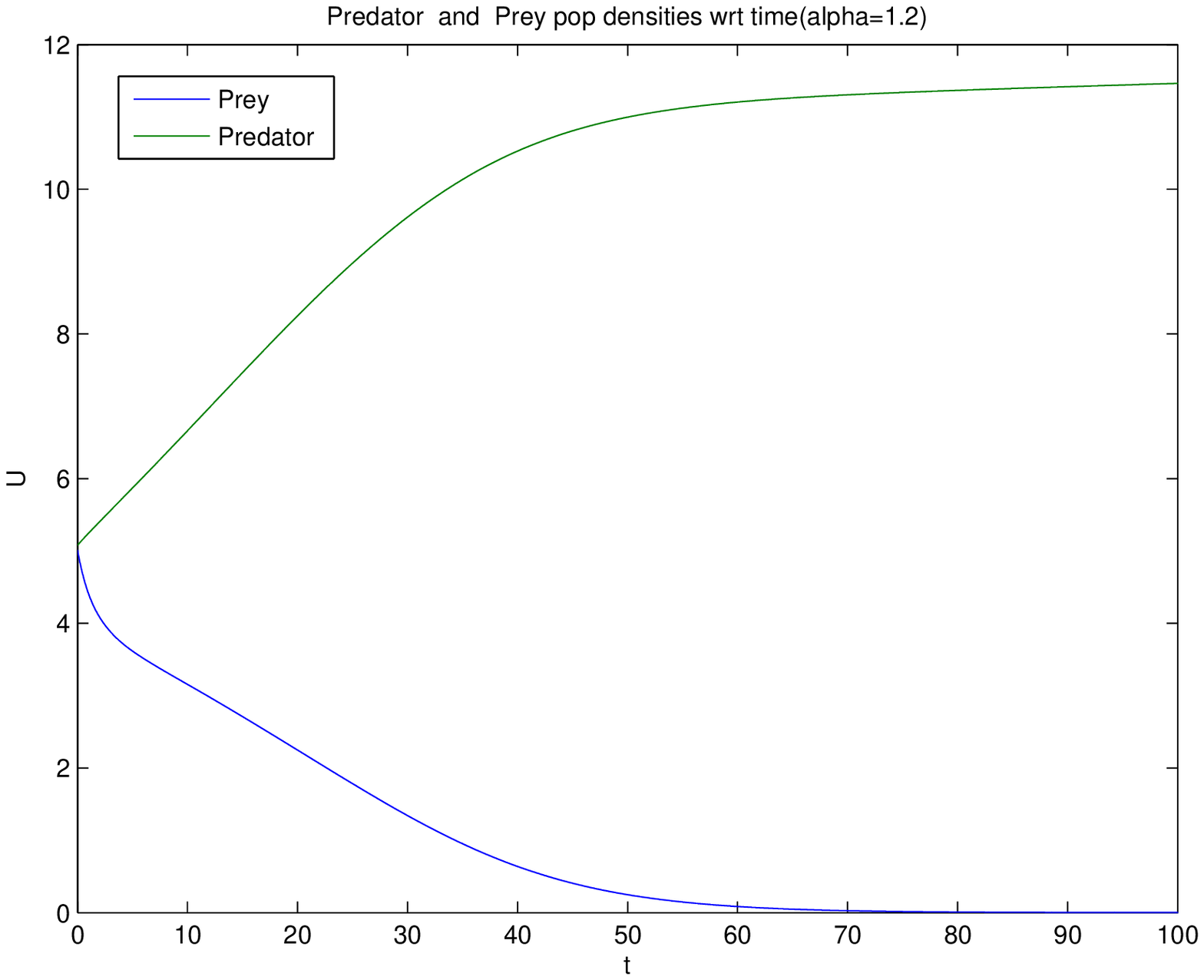}
\includegraphics[scale=0.3]{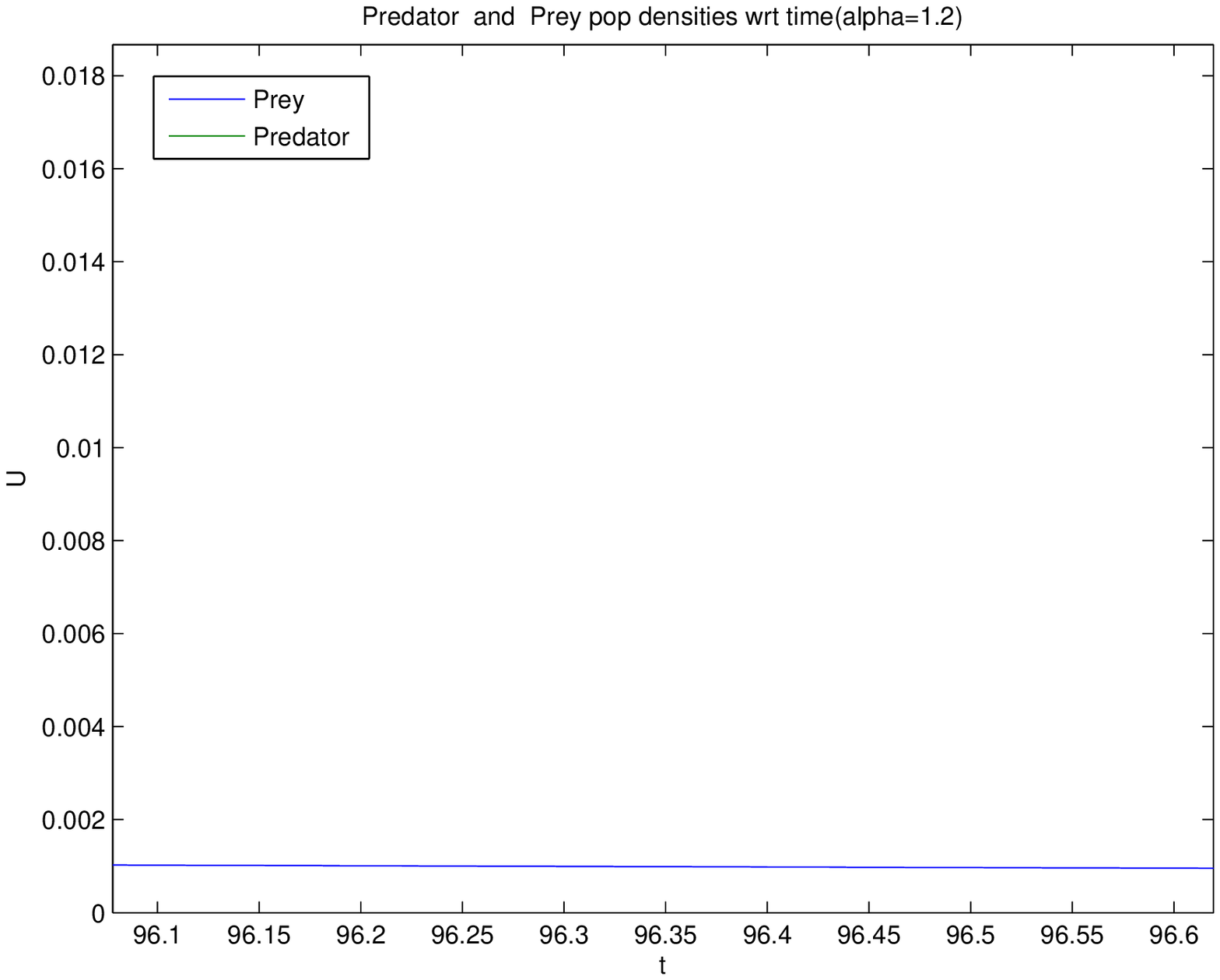}
\includegraphics[scale=0.3]{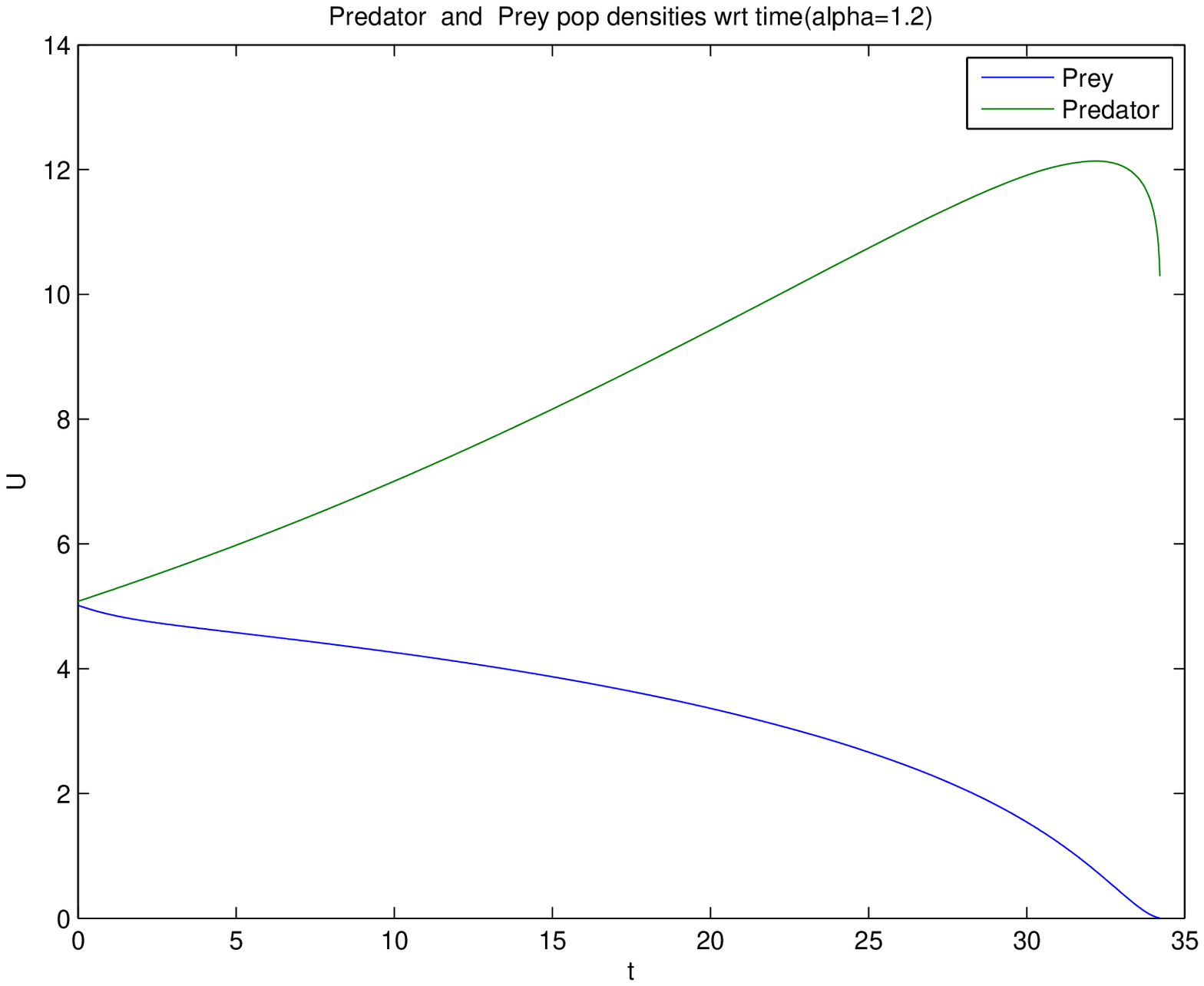}
    \caption{We simulate \eqref{Eqn:1} with $\beta=0.4, \delta=0.3, \gamma = 6, \xi = 7.6, \alpha = 1.2$. Here $\xi =7.6 > 7.5 = \frac{0.3}{ 0.4 - (0.3)(1.2)} = \frac{\delta}{\beta - \delta \alpha}$. Thus according to lemma \ref{lem:1} pest eradication is possible in a finite time. The left panel shows a simulation for initial conditions $x_{0}=5.017, y_{0}=5.078$. It seems the pest is eradicated starting at time about 70. However when we zoom into the pest density 
we see even at time 90, $x \approx 0.002$. However, when we try a density dependent introduction, $\xi(x) = 7.6 \sqrt{x} -1$, we see in the right panel finite time eradication at time, $t \approx 34.2$.}
\label{Fig:3sc11}
\end{figure}


\section{Discussion, Conjectures, and Conclusions}
\label{conj}
 Since a density dependent introduction of additional food can cause pest extinction in finite time - whereas a constant introduction cannot, we conjecture,

\begin{conjecture}
A pest density dependent introduction of additional food is more effective than a constant introduction.
\end{conjecture}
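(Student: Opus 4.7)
The plan is first to convert the informal conjecture into a precise mathematical statement, then harvest the two main theorems already established. The cleanest formalization fixes the ecological parameters $\gamma,\beta,\delta,\alpha$ and the initial state $(x_{0},y_{0})$, and measures effectiveness by the pest extinction time
\[
T^{*}[\xi] \;=\; \inf\{\,t>0 : x(t;\xi)=0\,\}
\]
for \eqref{Eqn:2a}. In the regime of Lemma \ref{lem:1}(a), Theorem \ref{thm:1a} gives $T^{*}[\xi_{0}]=\infty$ for every constant $\xi_{0}$, whereas Theorem \ref{thm:1b} produces a density-dependent $\xi(\cdot)$ with $T^{*}[\xi(\cdot)]<\infty$. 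Under this notion of effectiveness the conjecture is then essentially a direct corollary of those two theorems.

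To avoid triviality (one could always beat a constant $\xi_{0}$ by simply pouring in more food) I would impose a fairness constraint such as the pointwise bound $\sup_{t}\xi(x(t))\le \xi_{0}$ or an integrated budget $\int_{0}^{T}\xi(x(t))\,dt = \xi_{0}T$, and then prove the sharpened statement that under either constraint there still exists a density-dependent $\xi(\cdot)$ strictly outperforming $\xi_{0}$. The mechanism is a monotonicity observation on the $v=1/x$ system \eqref{Eqn:2tnc}: the blow-up driver $yv^{2}/(1+(1+\alpha\xi)v)$ is \emph{decreasing} in $\xi$, so concentrating food when the pest is abundant (small $v$) and withholding it once the pest is rare (large $v$) lets the $v^{2}$ numerator overwhelm the denominator, which is precisely the mechanism exploited by the choice $\xi(1/v)=(v^{q-1}-1)/\alpha$ in Theorem \ref{thm:1b}.

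Concretely I would (i) characterize the admissible class of $\xi(\cdot)$ satisfying the chosen budget constraint; (ii) for each admissible trajectory, run a Gronwall-type comparison against the auxiliary $\tilde{v}$ solving $\tilde{v}' = -\tilde{v} + \tilde{v}^{2}/(1+\tilde{v}^{q})$ as in \eqref{Eqn:v3}, extracting a blow-up time that depends only on $v_{0}$ and on the growth rate of $(1+\alpha\xi(1/v))v$; (iii) exhibit a specific budget-feasible $\xi(\cdot)$ for which this comparison yields finite blow-up; and (iv) invoke Theorem \ref{thm:1a} on the constant side to seal the strict inequality $T^{*}[\xi(\cdot)]<T^{*}[\xi_{0}]$.

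The main obstacle is the formalization rather than the analysis: \emph{more effective} is genuinely ambiguous, since faster extinction, smaller food budget, robustness with respect to initial conditions, and stability of the pest-free state all yield different sharpened conjectures with potentially different answers. A secondary obstacle is that the density-dependent $\xi$ built in Theorem \ref{thm:1b} becomes negative for small $x$, which has no ecological interpretation as food provisioning; any rigorous statement must either admit this (reinterpreting $\xi<0$ as food removal or selective predator harvesting) or restrict to $\xi\ge 0$, in which case Lemma \ref{lem:1bc} rules out finite-time extinction and the conjecture becomes false in that restricted sense. Resolving this tension is, I suspect, where the real content of the conjecture lies.
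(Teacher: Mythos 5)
The statement you are addressing is presented in the paper as a \emph{conjecture}, with no proof offered: the authors simply juxtapose Theorem \ref{thm:1a} (no constant $\xi$ yields finite-time eradication) with Theorem \ref{thm:1b} (a particular density-dependent $\xi(x)$ does) and leave ``more effective'' informal. Your proposal is therefore not competing with a proof in the paper but supplying a formalization the paper omits, and under your extinction-time metric $T^{*}[\xi]$ the claim does reduce to an immediate corollary of those two theorems, exactly as you say. You also correctly isolate the genuine difficulty, which the paper itself concedes via Lemma \ref{lem:1bc} and Corollary \ref{cor:1bc1}: the $\xi(1/v)=(v^{q-1}-1)/\alpha$ used in Theorem \ref{thm:1b} is negative at low pest density, and any $\xi(x)\ge 0$ provably cannot give finite-time eradication, so under the ecologically natural nonnegativity restriction the conjecture (in the finite-time sense) is false, and one must reinterpret $\xi<0$ as predator removal as the paper does in its discussion. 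Two cautions on the parts of your plan that go beyond the paper: the budget-constrained version ($\sup_t \xi \le \xi_0$ or an integrated budget) is not addressed anywhere in the paper and would require new analysis --- in particular your claimed monotonicity of $yv^{2}/(1+(1+\alpha\xi)v)$ in $\xi$ cuts the wrong way for the constructed $\xi$, which is \emph{large} when the pest is abundant and negative when it is rare, so ``concentrating food when the pest is abundant'' is not literally the mechanism of Theorem \ref{thm:1b}; and a comparison of extinction times $T^{*}[\xi(\cdot)]<T^{*}[\xi_0]$ is vacuously strict since the right side is $+\infty$, so the only substantive content is the finite-time eradication already established. In short, your proposal is a reasonable sharpening of an unproved conjecture, not a proof of a theorem the paper establishes, and its honest conclusion --- that the truth of the conjecture hinges on whether sign-changing $\xi$ is admitted --- is consistent with what the paper itself says.
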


In the simulations in Fig.\ref{Fig:3sc11} we use $\xi(x)= \frac{a \sqrt{x} - 1}{\alpha}$. We insert this into \eqref{Eqn:1} to obtain the following equation for the predator,

\begin{equation}
\label{Eqn:1bd2p}
  \frac{dy}{dt} = \overbrace{\frac{\beta xy}{a \sqrt{x} + x}}^{\mbox{gain from pest}} + \overbrace{\left( \frac{\beta}{\alpha}\right)\frac{(a \sqrt{x} -1) y}{a  \sqrt{x} + x}}^{\mbox{gain from additional food}} - \delta y.
\end{equation}

Notice at low pest density $\boxed{ a \sqrt{x} -1 < 0}$. Thus we conjecture 
\begin{conjecture}
At low pest density there is negative feedback to the predator from a pest density dependent introduction of additional food.
\end{conjecture}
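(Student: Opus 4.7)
The plan is to make precise what ``negative feedback'' means in the context of equation (\ref{Eqn:1bd2p}), and then verify that the chosen prescription $\xi(x)=(a\sqrt{x}-1)/\alpha$ indeed produces such a feedback at low pest densities. First I would isolate the \emph{additional-food contribution} to the predator growth rate as the second summand of (\ref{Eqn:1bd2p}),
\begin{equation}
F(x,y)=\left(\frac{\beta}{\alpha}\right)\frac{(a\sqrt{x}-1)y}{a\sqrt{x}+x},
\end{equation}
and interpret ``negative feedback on the predator'' to mean that $F(x,y)<0$ for every $y>0$ whenever $x$ lies below a critical threshold, i.e.\ the food term actively removes rather than adds to $\dot y$.

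Next I would identify this threshold explicitly. Since $y>0$ and $a\sqrt{x}+x>0$ for $x>0$, the sign of $F(x,y)$ is controlled entirely by the linear factor $a\sqrt{x}-1$; solving $a\sqrt{x}-1<0$ gives $x<1/a^{2}$. Thus the regime of ``low pest density'' is quantitatively $x\in(0,1/a^{2})$, and on this set $F(x,y)<0$ for every $y>0$. I would then contrast this with the constant-food setting of Theorem \ref{thm:1a}, in which the analogous term $\beta\xi_{0}y/(1+\alpha\xi_{0}+x)$ is manifestly non-negative for $\xi_{0}\geq 0$, so that the predator never experiences such a subtractive contribution. The density-dependent choice instead withdraws biomass from the predator at rate $|F(x,y)|$ once the pest becomes scarce, which is precisely the claimed feedback.

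Finally I would link the conclusion back to the earlier results of this paper: $\xi(x)<0$ on $(0,1/a^{2})$ means that the prescription changes sign, which is exactly the necessary condition for finite-time eradication established in Corollary \ref{cor:1bc1}. The negative feedback at low density is therefore not a side effect but the mechanism that distinguishes density-dependent supplementation from any constant or merely non-negative supplementation (cf.\ Lemma \ref{lem:1bc}).

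The main obstacle is not technical but definitional: ``negative feedback'' has no fixed meaning in the paper as stated, so the substantive step is pinning down the correct formalization. I would argue for ``sign of the additional-food contribution $F(x,y)$'' because it aligns with the boxed inequality $a\sqrt{x}-1<0$ that motivates the conjecture, but one could alternatively use $\partial\dot y/\partial\xi\big|_{(x,y)}$ or a linearization near the pest-free state; each definition requires a slightly different verification, and the choice affects what ecological statement is actually being supported. Once a definition is fixed, the remaining computation collapses to the elementary equivalence $a\sqrt{x}-1<0\iff x<1/a^{2}$ and the positivity of $a\sqrt{x}+x$ and $y$.
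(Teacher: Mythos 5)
Your argument is essentially the paper's own: the paper motivates this conjecture precisely by displaying the decomposition \eqref{Eqn:1bd2p} and boxing the observation that $a\sqrt{x}-1<0$ at low pest density, which is exactly your sign analysis of the additional-food term $F(x,y)$ (the paper leaves the statement as a conjecture rather than fixing a formal definition of ``negative feedback,'' and your explicit threshold $x<1/a^{2}$ and the link to Corollary \ref{cor:1bc1} are consistent with its discussion). No substantive difference in approach.
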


There are various possible explanation for the negative feedback to the predator at low pest densities. There could be \emph{intense} predator competition/interference \cite{S99, K04, B10, LB15, K10, PUS17, PB16, 1, SG01, V12} at these low densities. 
We conjecture this can be taken advantage of from a management point of view by \textbf{removing predators at low pest densities}. Thus we conjecture,

\begin{conjecture}
From a management standpoint we equate negative/positive feedback to predator removal/replenishment,
\begin{equation*}
\label{Eqn:1bc2p}
\boxed{ \overbrace{\frac{\beta ( \overbrace{a \sqrt{x} -1}^{- sign}) y}{a \sqrt{x} + x}}^{\mbox{loss to predator}}} = \mbox{predator removal},  \  \boxed{ \overbrace{\frac{\beta ( \overbrace{a \sqrt{x} -1}^{+ sign}) y}{a \sqrt{x} + x}}^{\mbox{gain to predator}}} = \mbox{predator replenishment} 
 \end{equation*}

\begin{figure}[!ht]
\includegraphics[scale=0.45]{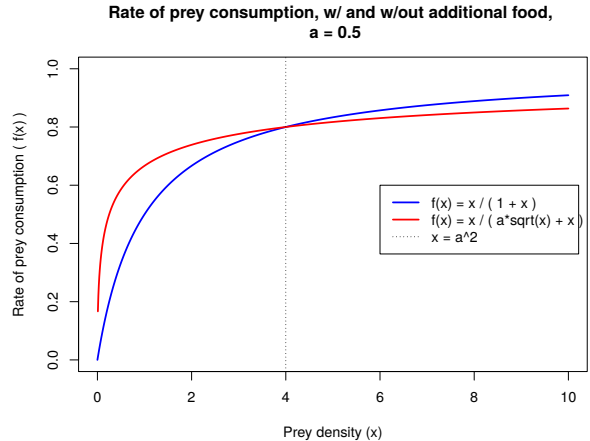}
\caption{We notice from the form of the density dependent quantity of additional food $\xi(x)$ that the predator feeding levels are higher at low pest density than if the quantity of additional food was constant. This could be a possible explanation for pest eradication in finite time via the density dependent introduction. }
\label{Fig:3sc12}
\end{figure}

\end{conjecture}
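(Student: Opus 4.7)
The plan is to make precise the equivalence claimed in the conjecture by constructing, alongside the density-dependent additional-food system \eqref{Eqn:1bd2p}, a \emph{managed} predator-pest system -- with no additional food but with active human intervention on the predator population -- whose trajectories coincide with those of \eqref{Eqn:1bd2p}. The guiding observation is that the boxed feedback term $T(x)\,y$, with per-capita rate $T(x) := \frac{\beta(a\sqrt{x}-1)}{a\sqrt{x}+x}$, enters the predator ODE linearly in $y$, so any per-capita source or sink of predators of the same magnitude and matching sign produces mathematically identical dynamics.

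First I would identify the sign-change threshold $x^{*} = 1/a^{2}$ of the factor $a\sqrt{x}-1$ and partition the phase plane into a low-density regime $\{x<x^{*}\}$, where $T(x)<0$, and a high-density regime $\{x>x^{*}\}$, where $T(x)>0$. Next I would define a management strategy $\mathcal{M}$ that, at each time $t$, either removes predators (by proportional harvesting) at per-capita rate $|T(x(t))|$ when $x(t)<x^{*}$, or replenishes them (by proportional seeding from an external reservoir) at the same per-capita rate when $x(t)>x^{*}$. I would then write the predator equation for the pest-predator system with no additional food but with $\mathcal{M}$ active, and verify term-by-term that its right-hand side reduces to \eqref{Eqn:1bd2p}. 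By Picard uniqueness for the resulting coupled $(x,y)$ ODE, the density-dependent-food model and the managed model then produce identical trajectories, which is the precise sense in which the boxed loss term equals predator removal and the boxed gain term equals predator replenishment.

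The main obstacle will be the modelling convention for the intervention. In field practice, predator culling or seeding is most often represented by an exogenous rate $\pm R(x)$ that does \emph{not} scale with $y$, whereas the additional-food feedback $T(x)y$ is strictly per-capita. The equivalence therefore holds only under a proportional-harvesting / proportional-seeding interpretation of $\mathcal{M}$, and I would have to justify this convention biologically -- for instance, by observing that both trapping effort (removal) and release effort calibrated to maintain a target density (replenishment) naturally scale with the current predator abundance $y$. Once this modelling choice is fixed, the term-by-term matching is immediate, and one may invoke Theorem \ref{thm:1b} directly to conclude that the equivalent managed regime achieves pest eradication in finite time, thereby validating the conjectured management recipe. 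A residual subtlety worth flagging is behaviour on the crossing set $\{x = x^{*}\}$, where $T$ vanishes and $\mathcal{M}$ switches mode; since $T$ is continuous through zero, no Filippov-type sliding analysis is required, and standard ODE theory applies without modification.
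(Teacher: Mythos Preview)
The statement you are addressing is a \emph{conjecture}; the paper offers no proof of it, only heuristic discussion and a figure. There is therefore no ``paper's own proof'' to compare against, and any argument you supply is going beyond what the authors claim.

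That said, your proposed equivalence has a genuine gap. You plan to start from the pest--predator system \emph{with no additional food} and append the per-capita management term $T(x)y$ to the predator equation, then verify term-by-term agreement with \eqref{Eqn:1bd2p}. This cannot succeed: in the no-food system the functional response carries denominator $1+x$, whereas in the density-dependent-food system the choice $\xi(x)=(a\sqrt{x}-1)/\alpha$ changes the denominator to $1+\alpha\xi(x)+x=a\sqrt{x}+x$ in \emph{both} the prey equation and the ``gain from pest'' term of \eqref{Eqn:1bd2p}. Concretely, your managed predator equation would read
\[
\frac{dy}{dt}=\frac{\beta x y}{1+x}-\delta y+\frac{\beta(a\sqrt{x}-1)y}{a\sqrt{x}+x},
\]
which does not reduce to \eqref{Eqn:1bd2p}, and the accompanying prey equation $dx/dt=x(1-x/\gamma)-xy/(1+x)$ likewise differs from the density-dependent-food prey equation. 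So the two vector fields are not identical, Picard uniqueness gives you nothing, and the invocation of Theorem~\ref{thm:1b} for the managed system is unjustified.

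The paper itself seems aware of this mismatch: the very next conjecture (Conjecture~4) posits only that the effect can be \emph{mimicked} by some density-dependent term $K_{3}(x,y)$ added to the classical $1/(1+x)$ response, not that the boxed term literally does the job. If you want to salvage your argument, you would need either to absorb the change in functional response into the management protocol as well (which stretches the meaning of ``predator removal/replenishment''), or to downgrade the claim from trajectory identity to some weaker qualitative equivalence and then actually prove finite-time extinction for the managed system independently.
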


 Based on this we conjecture
 \begin{conjecture}
The effect of additional food can be mimicked by replenishing/removing the predator via a density dependent term $K_{3}(x,y)$,
\begin{equation}
\label{Eqn:12n}
\frac{dx}{dt} = x(1-\frac{x}{\gamma}) - \frac{xy}{1+ x}, \  \frac{dy}{dt} = \beta \frac{xy}{1+ x} - \delta y + K_{3}(x,y).
 \end{equation}
\end{conjecture}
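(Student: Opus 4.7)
The plan is constructive. Given a density-dependent schedule $\xi(x)$ for which Theorem~\ref{thm:1b} produces finite-time pest extinction under~\eqref{Eqn:2a}, I would define
\begin{equation*}
K_{3}(x,y) \;:=\; \left(\frac{\beta(x+\xi(x))}{1+\alpha\xi(x)+x} \;-\; \frac{\beta x}{1+x}\right) y,
\end{equation*}
so that the $y$-equation of~\eqref{Eqn:12n} coincides term by term with the $y$-equation of~\eqref{Eqn:2a}. For the explicit choice $\xi(1/v)=(v^{q-1}-1)/\alpha$ used in Theorem~\ref{thm:1b}, a direct computation shows that $K_{3}$ is negative at low pest density and positive at higher pest density, exactly matching the predator removal/replenishment dichotomy discussed in the preceding conjecture and the sign-change requirement of Corollary~\ref{cor:1bc1}.

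The key step is to verify that \eqref{Eqn:12n} with this $K_{3}$ still drives $x$ to zero. Passing to $v=1/x$, the mimicked prey equation reads $\frac{dv}{dt}=-v+\tfrac{1}{\gamma}+\tfrac{yv^{2}}{1+v}$, whereas the original~\eqref{Eqn:v1} has denominator $1+v^{q}$ in place of $1+v$. Since the $y$-dynamics are preserved verbatim, the lower bound $y\geq 1$ established in Case~1 of Theorem~\ref{thm:1b} still holds, and comparing $v$ with $\tilde{v}$ solving $\tilde{v}'=-\tilde v + \tfrac{\tilde v^{2}}{1+\tilde v}$ yields unbounded growth of $v$ for sufficiently large $v_{0}$.

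The main obstacle will be that the denominator $1+v$ (as opposed to $1+v^{q}$ with $q<1$) yields only exponential, not finite-time, divergence of this super-solution. To close this gap I would augment $K_{3}$ with a superlinear driver of the form $\sigma(x)y^{1+\rho}$, with $\sigma$ supported near $x=0$ and $0<\rho\ll 1$, forcing $y$ itself to blow up in finite time; back-substitution through the term $\tfrac{yv^{2}}{1+v}$ in the prey equation, combined with a comparison analogous to~\eqref{Eqn:2tna1}, would then deliver finite-time blow-up of $v$ and hence finite-time extinction of $x$. Characterizing the minimal such augmentation, and confirming numerically as in Fig.~\ref{Fig:3sc11} that the composite $K_{3}$ retains its sign-change structure, would complete the proof.
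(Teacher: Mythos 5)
This statement is a \emph{conjecture}; the paper offers no proof of it, so your argument cannot be measured against one and must stand on its own. The first step of your construction is fine: defining $K_{3}$ so that the $y$-equation of \eqref{Eqn:12n} reproduces that of \eqref{Eqn:2a} is the natural move, and your observation that this $K_{3}$ changes sign is consistent with Corollary \ref{cor:1bc1}. The difficulty, as you correctly sense, is entirely in the prey equation, whose functional response is frozen at $\frac{xy}{1+x}$ in \eqref{Eqn:12n}. But your handling of it contains a concrete error: with $y\geq 1$ the mimicked equation in $v=1/x$ gives $\frac{dv}{dt}\geq -v+\frac{1}{\gamma}+\frac{v^{2}}{1+v}$, and since $-v+\frac{v^{2}}{1+v}=-\frac{v}{1+v}<0$, the comparison function $\tilde v$ solving $\tilde v'=-\tilde v+\tilde v^{2}/(1+\tilde v)$ is \emph{decreasing} for every $\tilde v_{0}>0$. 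It does not exhibit ``unbounded growth for sufficiently large $v_{0}$,'' exponential or otherwise. The superlinearity that drives blow-up in Theorem \ref{thm:1b} comes precisely from the denominator $1+v^{q}$ with $q<1$ in \eqref{Eqn:v1} (so that $\frac{v^{2}}{1+v^{q}}\sim v^{2-q}$), and that denominator is exactly what \eqref{Eqn:12n} forbids. Indeed, by the same estimate as in Lemma \ref{lem:1bc}, $\frac{dv}{dt}<yv$ here, so $v\leq v_{0}\exp\bigl(\int_{0}^{t}y\,ds\bigr)$ and finite-time extinction is impossible for \emph{any} choice of $K_{3}$ under which $y$ remains integrable on finite time intervals.

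Your proposed escape --- augmenting $K_{3}$ with $\sigma(x)y^{1+\rho}$ so that $y$ itself blows up in finite time --- is therefore not an optional refinement but the entire content of the claim, and it is not established. You would need the coefficient $\sigma(x(t))$ to stay uniformly positive along the trajectory for $y'\geq c\,y^{1+\rho}$ to force blow-up, yet $\sigma$ is supported near $x=0$, so the forcing only activates once $x$ is already small; this self-consistency is not addressed. Moreover, even granting blow-up of $y$ at some $T^{*}$, the pest vanishes only as the solution itself ceases to exist, and the additional-food system of Theorem \ref{thm:1b} eradicates the pest with $y$ growing at most exponentially --- so a mimic requiring $y\to\infty$ in finite time arguably does not ``mimic the effect of additional food'' at all. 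A more promising route, if you want to pursue the conjecture, is to allow $K_{3}$ to depend on $y$ superlinearly only in the combination $y\cdot(\text{prey consumption deficit})$, or to reinterpret ``mimicked'' as reproducing the infinite-time decay rates of Lemma \ref{lem:1a} rather than finite-time extinction; as written, your argument does not close the gap it correctly identifies.
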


We have shown for a constant quantity of additional food, pest eradication is not possible in finite time no matter how large the quantity or how high the quality of the additional food is. However, with the introduction of additional food, the pest population can be reduced to very small numbers and in a real biological population, this may turn out to be sufficient. Populations with a small numbers of individuals are often subject to increased demographic stochasticity, leading to high likelihoods of extinction \cite{PK05} despite their continuous deterministic dynamics suggesting otherwise. Future exploration of models that include stochasticity could help to determine how the likelihood of eradication changes in the presence of additional food when small population processes are considered. Alternatively, since a constant quantity of additional food cannot yield finite time eradication of the pest, some amount of stochasticity could even cause pest rebound. Proving conditions under which this might occur can also make for an interesting future direction.

In contrast, when additional food is supplied in a pest density dependent way, we show that pest eradication is possible in finite time. We suggest that the density dependent food addition affects the dynamics of this system in two key ways that drive pest population eradication in finite time. First, the presence of the secondary food source prevents the predator population from being eradicated even when their primary food source, the prey, has been reduced to a size that can no longer sustain the predator population on its own. This mechanism also applies to the constant additional food models. Secondly, and unique to density-dependent food supplementation, prey consumption rate shifts upwards at low prey density compared to prey consumption rate with constant additional food, see Fig. \ref{Fig:3sc12}. For the particular density-dependent additional food functional form we have explored here, increased consumption rate at low prey numbers is the result of the additional food quantity becoming negative. While adding a negative quantity of food is not feasible in reality, an equivalent shift might be expected if some number of predators are removed from the system. However, predator removal when predator numbers are already very low may also prove to be difficult or impossible in a real system. Note, the apparent loss to the predator coming from the loss term $\frac{\beta ( \overbrace{a \sqrt{x} -1}^{- sign}) y}{a \sqrt{x} + x}$, at low pest density (when $\sqrt{x} < \frac{1}{a}$), might not mean predator removal - rather it could be related to something that the predator has to give up as a means to increase consumption at low pest densities. Investigating possible explanations for this would make for interesting future work.

Regardless of the mode by which this is applied in real systems, a key to pest eradication seems clear - increase per capita prey consumption rates when when the prey population size is low. Exploration of other functional forms for the density-dependent food supplementation may prove fruitful in this regard. In a real system, shifts to the per capita prey consumption rate might also come through changes in predator behaviour, for example shifts in prey handling time. 

Mathematical models of biocontrol can provide an idea of what dynamics are possible and suggest routes by which pest eradication is theoretically feasible. Future directions involve studying the effects of pest refuge, evolutionary effects as well as stochastic effects \cite{PQB16, PK05, B13, B15, F14, H93}. However, experimental tests are required to assess the biological reality of applying these strategies. Laboratory experiments using dynamically interacting predator (protozoa), prey (bacteria), and additional food (a chemical supplement) are currently underway and will help to provide an additional intermediate step linking theory to successful biocontrol applications in the natural world.
\section{Acknowledgements}
RP would like to acknowledge valuable support from the National Science Foundation via awards DMS 1715377 and DMS 1839993.

\section{Appendix}
\label{app}
We next provide an alternate proof to Theorem \ref{thm:1a}

\begin{proof}
Consider the new variable $V = \frac{y}{x}$, then we obtain the following equation for the dynamics of $V$,

\begin{equation}\label{Eqn:2tna3}
	\frac{d V}{dt} = \left( \frac{\beta x}{1+\alpha \xi + x} + \frac{\beta \xi }{1+\alpha \xi + x} - \delta  - (1-\frac{x}{\gamma})\right) V + \left (  \frac{ x}{1+\alpha \xi + x}  \right)V^{2}
\end{equation}

Clearly via positivity of the states we have,

\begin{eqnarray}\label{Eqn:2tna3}
	&& \frac{d V}{dt} \nonumber \\
	&=& \left( \frac{\beta x}{1+\alpha \xi + x} + \frac{\beta \xi }{1+\alpha \xi + x} - \delta  - (1-\frac{x}{\gamma})\right) V + \left (  \frac{ x}{1+\alpha \xi + x}  \right)V^{2}  \nonumber \\
	&<&  C x V^{2}  \nonumber \\
\end{eqnarray}

for some constant $C$. Thus consider a super solution  $\tilde{V} =  \frac{y}{x}$ to $V$, which solves

\begin{equation}\label{Eqn:2tna35}
	\frac{d \tilde{V}}{dt} =  C x  \tilde{V}^{2}, \ \tilde{V}_{0} = V_{0}.
	\end{equation}
	
We now argue by contradiction.
Assume $\tilde{V}$ blows-up at the finite time $T^{*}$, this is only possible if $x$ goes extinct at the same finite time $T^{*}$, as $y$ is bounded by $e^{\left(\frac{\beta \xi}{1 + \alpha \xi} + \beta \right)t}$, that is,

\begin{equation}
\lim_{t \rightarrow T^{*} < \infty} x(t) \rightarrow 0, \ \lim_{t \rightarrow T^{*} < \infty} V(t) \rightarrow \infty.
\end{equation}

We integrate \eqref{Eqn:2tna35} in the time interval $[T^{*}-\delta, T^{*}]$, ($\delta <<1$), to obtain

\begin{equation}
\label{Eqn:2tna3}
\infty = 	 \tilde{V}(T^{*}) =  \frac{1}{  \frac{1}{ \tilde{V}(T^{*}-\delta)}  - \int^{T^{*}}_{T^{*}-\delta}x(s)ds    }
\end{equation}

the only way we can have equality in \eqref{Eqn:2tna3} is if

\begin{equation}\label{Eqn:2tna31}
 \frac{1}{ \tilde{V}(T^{*}-\delta)}  - \int^{T^{*}}_{T^{*}-\delta}x(s)ds  = 0
\end{equation}

However by the continuity of the state variables $ \tilde{V}, x$, upto (but not including the blow-up/extinction time $T^{*}$) we have that for any $\epsilon > 0$ (assuming $\epsilon << 1$) there exists a $\delta(\epsilon) > 0$ s.t 

\begin{equation}\label{Eqn:2tna310}
 \frac{1}{\epsilon} < \tilde{V}(T^{*}-\delta) <   K(\delta),
\end{equation}
Where $K$ is a function of $\delta$, and
\begin{equation}\label{Eqn:2tna312}
  \int^{T^{*}}_{T^{*}-\delta}x(s)ds \leq |T^{*}-(T^{*}-\delta)||x|  < \delta C\epsilon
\end{equation}

Here $C$ is an upper estimate on $y$ at time $T^{*}$, so $C=e^{e^{\left(\frac{\beta \xi}{1 + \alpha \xi} + \beta \right)T^{*}}}$, (as $\tilde{V} = \frac{y}{x}$).

Thus
\begin{equation}\label{Eqn:2tna31}
 \frac{1}{ \tilde{V}(T^{*}-\delta)}  - \int^{T^{*}}_{T^{*}-\delta}x(s)ds  >  \frac{1}{K(\delta)} - C \delta \epsilon,
\end{equation}

choosing $\epsilon = \frac{\epsilon}{C}$, we see that as long as we choose $K(\delta) \delta < \frac{1}{\epsilon}$, we have,

\begin{equation}\label{Eqn:2tna31}
 \frac{1}{ \tilde{V}(T^{*}-\delta)}  - \int^{T^{*}}_{T^{*}-\delta}x(s)ds  >  \frac{1}{K(\delta)} - \delta \epsilon > 0,
\end{equation}



Thus we have a contradiction to \eqref{Eqn:2tna3}, and $\tilde{V}$ cannot have blown up at $T^{*}$. Since $T^{*}$ is arbitrary we can conclude $\tilde{V}$ cannot blow up in finite time. Since $V < \tilde{V}$, $V$ also cannot blow up in finite time by comparison. Thus $x$ cannot go extinct in finite time.

\end{proof}


\begin{thebibliography}{99}



\bibitem{B07}
C.J. Bampfylde and M.A. Lewis,
\newblock \emph{Biological control through intraguild predation: case studies in pest control, invasive species and range expansion},
\newblock  Bulletin of Mathematical Biology, vol 69, pp 1031-1066, 2007.







\bibitem{V12} J. P. DeLong $\&$ D. A. Vasseur,  
  \newblock \emph{Size?density scaling in protists and the links between consumer-resource interaction parameters},
  \newblock  Journal of Animal Ecology, 81(6), 1193-1201, 2012.
 
 
 



%
\bibitem{V96}  R. Van Driesche and  T. Bellows,
\newblock \emph{Biological Control},
\newblock  Kluwer Academic Publishers, Massachusetts, 1996.



%
%
%
\bibitem{K17} Y. Kang D. Bai, L. Tapia and H. Bateman, 
 \newblock \emph{Dynamical effects of biocontrol on the ecosystem: Benefits or Harm?},
 \newblock  Journal of Applied Mathematical Modeling, 51, 361-385, 2017.





\bibitem{1}	
M.P. Hassell,
\newblock \emph{Mutual interference between searching insect parasites},
\newblock J Anim Ecol, 40:473-486, 1971.



\bibitem{LB15}	
L. Pribylova and L. Berec, 
\newblock  \emph{Predator interference and stability of predator prey dynamics}
\newblock J Math Biol, 71:301-323, 2015.



\bibitem{PQB16}
R.D. Parshad, E. Qansah, K. Black and M. Beauregard, 
\newblock \emph{Biological control via ``ecological" damping: an approach that attenuates non-target effects},
\newblock Math Biosci, 273:23-44, 2016.



\bibitem{PB16}
\newblock R.D. Parshad, S. Bhowmick, E. Quansah, A. Basheer and R.K. Upadhyay, 
\newblock \emph{Predator interference effects on biological control: the paradox of the generalist predator revisited},
\newblock Communications in Nonlinear Science $\&$ Numerical Simulation, 39:169-184, 2016. 





\bibitem{SG01} G.T. Skalski and J.F. Gilliam, 
\newblock \emph{Functional responses with predator interference: viable alternatives to the Holling type II model},
\newblock Ecology, 82:3083-3092, 2001. 

\bibitem{CT17} S. Chakraborty, P. K. Tiwari,  S. K. Sasmal, S. Biswas, S. Bhattacharya, $\&$ J. Chattopadhyay, 
\newblock \emph{Interactive effects of prey refuge and additional food for predator in a diffusive predator-prey system}, 
\newblock Applied Mathematical Modelling, 47, 128-140, 2017.
 
\bibitem{SP18} P. D. N. Srinivasu,  D. K. K. Vamsi $\&$ I. Aditya, 
\newblock \emph{ Biological Conservation of Living Systems by Providing Additional Food Supplements in the Presence of Inhibitory Effect: A Theoretical Study Using Predator-Prey Models},
 \newblock Differential Equations and Dynamical Systems, 26(1-3), 213-246, 2018.
 

\bibitem{MK07} Meyer, J. R., $\&$ Kassen, R, 
\newblock \emph{The effects of competition and predation on diversification in a model adaptive radiation}, 
\newblock Nature, 446(7134), 432, 2007.
 


\bibitem{B10} L. Berec,
\newblock \emph{Impacts of foraging facilitation among predators on predator prey dynamics},
\newblock Bull Math Biol, 72:94-121, 2010.

\bibitem{K04} V. Krivan and L. Vrkoc, 
\newblock  \emph{Should``handled" prey be considered? Some consequences for functional response, predator-prey dynamics and optimal foraging theory}, 
 \newblock J Theor Biol, 227:167-174, 2004.

\bibitem{K10} V. Krivan, 
\newblock \emph{Evolutionary stability of optimal foraging: partial preferences in the diet and patch models},
\newblock J Theor Biol, 267:486-494, 2010.

\bibitem{PUS17} R.D. Parshad, R.K. Upadhyay, S. Mishra, S. Tiwari and S. Sharma, 
\newblock \emph{On the explosive instability in a three species food chain model with modified Holling type IV functional response}, 
\newblock Mathematical Methods in the Applied Sciences, DOI: 10.1002/mma.4419, 2017. 

\bibitem{S17} H. Seebens, T.M. Blackburn, E.E. Dyer, P. Genovesi, P.E. Hulme, J.M. Jeschke and S. Bacher,
\newblock \emph{No saturation in the accumulation of alien species worldwide},
\newblock Nature Communications, 8:14435, 2017


\bibitem{T15} 
\newblock A. Tena, A. Pekas, D. Cano, F.L. Wckers, A. Urbaneja and Q. Paynter, 
\newblock \emph{Sugar provisioning maximizes the biocontrol service of parasitoids},
\newblock J. Appl. Ecol., 52:795804, 2015.


%

\
\bibitem{L16}
\newblock M. A. Lewis, S. V. Petrovskii and J.R. Potts,
\newblock \emph{The mathematics behind biological invasions},
\newblock Springer, vol 44, 2016.




\bibitem{R95}
J.A. Rosenheim, H.K. Kaya, L.E. Ehler, J.J. Marois and B.A. Jaffee, 
\newblock \emph{Intraguild predation among biological-control agents: theory and practice}, 
\newblock Biological Control, 5:303-335, 1995.

\bibitem{PK05} J. K. Percus,
  \newblock \emph{ Small population effects in stochastic population dynamics},
  \newblock  Bulletin of mathematical biology, 67(6), 1173-1194, 2005.



\bibitem{SP10} P. Srinivasu and B.S.R.V. Prasad,
\newblock \emph{Time optimal control of an additional food provided predator-prey system with applications to pest management and biological conservation},
\newblock Journal of Mathematical Biology, 60:591-613, 2010.

\bibitem{SP07} P. Srinivasu, B.S.R.V. Prasad and M. Venkatesulu, 
\newblock \emph{Biological control through provision of additional food to predators: a theoretical study}, 
\newblock Theoretical Population Biology, 72:111-120, 2007.

\bibitem{SP11} P. Srinivasu and B.S.R.V. Prasad.
\newblock \emph{Role of quantity of additional food to predators as a control in predator prey systems with relevance to pest management and biological conservation},
\newblock Bulletin of Mathematical Biology, 73(10):2249-2276, 2011.




\bibitem{B13}
\newblock S. F. Bailey, J. R. Dettman, P. B. Rainey and R. Kassen. 
\newblock \emph{Competition both drives and impedes diversification in a model adaptive radiation}, 
\newblock Proceedings of The Royal Society B, 280(1766):20131253, 2013.

\bibitem{B15}
\newblock S. F. Bailey, N. Rodrigue and R. Kassen. 
\newblock \emph{The effect of selection environment on the probability of parallel evolution},
\newblock Molecular Biology and Evolution, 32(6):1436-1448, 2015.

%


\bibitem{W08}
M.R. Wade, M.P. Zalucki, S.D. Wratten and K.A. Robinson. 
\newblock \emph{Conservation biological control of arthropods using artificial food sprays: Current status and future challenges}, 
\newblock Biological Control, 45(2):185-199, 2008.

\bibitem{SV06}
M.W. Sabelis and P.C. Van Rijn,
  \newblock \emph{When does alternative food promote biological pest control?}, 
  \newblock  IOBC WPRS BULLETIN, 29(4), 195, 2006. 

\bibitem{C15}
K. Czaja, K. Goralczyk, P. trucinski, A. Hernik, W. Korcz, M. Minorczyk, M. Lyczewska and J.K. Ludwicki, 
  \newblock  \emph{Biopesticides-towards increased consumer safety in the European Union}, 
    \newblock Pest Manag. Sci. 71, 3-6, 2015. 

\bibitem{F14}
V.P. Friman, A. Jousset and A. Buckling, 
\newblock \emph{Rapid prey evolution can alter the structure of predator-prey communities},
\newblock Journal of Evolutionary Biology, 27(2):374-380, 2014.

\bibitem{RB96}
\newblock P.B. Rainey and M.J. Bailey, 
\newblock \emph{Physical and genetic map of the Pseudomonas fluorescens SBW25 chromosome},
\newblock Molecular Microbiology, 19(3):521-533, 1996.




\bibitem{ER97}
\newblock E.W. Evans and D.R. Richards, 
\newblock \emph{Managing the dispersal of ladybird beetles (Col.: Coccinellidae): Use of artificial honeydew to manipulate spatial distributions},
\newblock Entomophaga, 42(1):93-102, 1997.

\bibitem{S10}
\newblock G. Saunders, B. Cooke, K. McColl, R. Shine and T. Peacock. 
\newblock \emph{Modern approaches for the biological control of vertebrate pests: An Australian perspective}, 
\newblock Biological Control, 52(3):288-295, 2010.

\bibitem{SC07}
\newblock S.P. Shannon, T.H. Chrzanowski and J.P. Grover. 
\newblock \emph{Prey food quality affects flagellate ingestion rates},
\newblock Microbial Ecology, 53(1):66-73, 2007.

\bibitem{W16}
\newblock W.C. Wetzel, H.M. Kharouba, M. Robinson, M. Holyoak and R. Karban. 
\newblock \emph{Variability in plant nutrients reduces insect herbivore performance},
\newblock Nature, 539(7629):425, 2016.

\bibitem{H93}
\newblock B.A. Hawkins, M.B. Thomas and M.E. Hochberg. 
\newblock \emph{Refuge theory and biological control},
\newblock Science, 262(5138):1429-1432, 1993.

\bibitem{PZM05} D. Pimentel, R. Zuniga and D. Morrison,
\newblock \emph{Update on the environmental and economic costs associated with alien-invasive species in the United States}
\newblock Ecological Economics, 52(3), 273-288, 2005.

\bibitem{PSCEWT16} D. R. Paini, A. W. Sheppard, D. C. Cook, P. J. De Barro, S. P. Worner, and M. B. Thomas
\newblock \emph{Global threat to agriculture from invasive species}
\newblock Proceedings of the National Academy of Sciences, 113(27) 7575-7579, 2016.






\bibitem{S99}
\newblock W. E. Snyder and D. H. Wise,  
\newblock \emph{Predator interference and the establishment of generalist predator populations for biocontrol}, 
\newblock Biological Control, 15(3), 283-292, 1999.


\bibitem{E11} 
\newblock E. Jongejans, K. Shea, O. Skarpaas, D. Kelly, and S. P. Ellner. 
\newblock \emph{Importance of individual and environmental variation for invasive species spread: a spatial integral projection model},
\newblock Ecology, 92: 86-97, 2011.

\bibitem{PB14} D. Pimentel and M. Burgess, 
\newblock \emph{Environmental and economic costs of the application of pesticides primarily in the United States},
\newblock In Integrated pest management (pp. 47-71), Springer, Dordrecht, 2014.


\bibitem{PSCEWT16} D. R. Paini, A. W. Sheppard, D. C. Cook, P. J. De Barro, S. P. Worner, and M. B. Thomas,
\newblock \emph{Global threat to agriculture from invasive species},
\newblock Proceedings of the National Academy of Sciences, 113(27) 7575-7579, 2016.

\bibitem{CO98} L. A. Canas and R. J. O'Neil,
\newblock \emph{Applications of sugar solutions to maize, and the impact of natural enemies on Fall Armyworm}
\newblock International Journal of Pest Management, 44(2) 59-64, 1998

\bibitem{ES93} E. W. Evans and J. G. Swallow, 
\newblock \emph{Numerical responses of natural enemies to artificial honeydew in Utah alfalfa},
\newblock Environmental Entomology, 22(6) 1392-1401, 1993.



\end{thebibliography}
\end{document}